\documentclass[runningheads]{llncs}
\usepackage{graphicx}
\usepackage{ifthen}
\usepackage{xspace}
\usepackage{algorithm}
\usepackage{algorithmic}
\usepackage{xcolor}
\usepackage{amsmath, amssymb, amsfonts}
\usepackage{fullpage}

\begin{document}
%\documentclass[runningheads]{llncs}
%\usepackage{graphicx}
%\usepackage{ifthen}
%\usepackage{xspace}
%\usepackage{algorithm}
%\usepackage{algorithmic}
%\usepackage{amsmath, amssymb, amsfonts}
%\usepackage{fullpage}
%\usepackage[dvipsnames]{xcolor}
%\usepackage{array}
%\usepackage{tabularx}
%\usepackage{multirow}
%\usepackage{threeparttable}
%\usepackage[numbers,sort&compress]{natbib}
%\begin{document}

%\newcommand{\tr}{\textsf{triangle}}
%\newcommand{\li}{\textsf{line}}
%%\newcommand{\xx}[3]{{#1}^{(#2)}_{#3}}
%\newcommand{\xx}[3]{{#1}^{#2}_{#3}}
%\newcommand{\mc}[1]{\mathcal{#1}}
%\newcommand{\BRi}[1]{\mathcal{B}^{R_i}(#1)}
%\newcommand{\bK}{\mc{\bar K}}
%\newcommand{\bq}{\bar q}
%\newcommand{\Ex}[1]{\mathop{\mathrm{E}}\left[{#1}\right]}
%\newcommand{\ex}{\mathop{\mathrm{E}}}
%\newcommand{\pr}[1]{\Pr\left[{#1}\right]}
%
%\newtheorem{theorem}{Theorem}
%\newtheorem{claim}{Claim}
%%\newtheorem{corollary}[theorem]{Corollary}
%\newtheorem{proposition}[theorem]{Proposition}
%\newtheorem{lemma}[theorem]{Lemma}
%\newtheorem{property}{Property}
%%\newtheorem{fact}[theorem]{Fact}
%\newtheorem{proof}[theorem]{Proof}
%
%%\theoremstyle{definition}
%\newtheorem{definition}{Definition}
%%\newtheorem{example}{Example}
%%\newtheorem{assumption}[theorem]{Assumption}
%
%%\theoremstyle{remark}
%\newtheorem{remark}{Remark}

\title{Multiagent Learning for Competitive Opinion Optimization}%\thanks{This paper appeared in preliminary form as a poster in EC 2019 (the 20th ACM Conference on Economics and Computation).}}

\author{Po-An Chen\inst{1}\thanks{Corresponding author, supported in part by MOST 110-2410-H-A49-011-} \and Chi-Jen Lu\inst{2} \and Chuang-Chieh Lin\inst{3}\thanks{Co-corresponding author, supported in part by MOST 110-2222-E-032-002-MY2.} \and Ke-Wei Fu\inst{1}}
\authorrunning{P.-C. Chen, C.-J. Lu, and C.-C. Lin, and K.-W. Fu}
% First names are abbreviated in the running head.
% If there are more than two authors, 'et al.' is used.
%
\institute{Institute of Information Management, National Yang Ming Chiao Tung University%\\1001 University Rd, Hsinchu City 300
, Taiwan \\\email{\{poanchen,andrewfu.mg09\}@nycu.edu.tw}\and
Institute of Information Science, Academia Sinica%128 Academia Road, Section 2, Nankang, Taipei 11529
, Taiwan 
\\\email{cjlu@iis.sinica.edu.tw}\and
Department of Computer Science, Tamkang University%\\151 Yingzhuan Rd., Tamsui Dist., New Taipei City 25137
, Taiwan 
\\\email{josephcclin@gms.tku.edu.tw}}
\maketitle
\begin{abstract}
From a perspective of designing or engineering for opinion formation games in social networks,
the \emph{opinion maximization (or minimization)} problem has been studied mainly for designing subset selecting algorithms.
We define a two-player zero-sum Stackelberg game of competitive opinion optimization by letting the player under study as the leader minimize the sum of expressed opinions by doing so-called ``internal opinion design",
knowing that the other adversarial player as the follower is to maximize the same objective by also conducting her own internal opinion design. We furthermore consider multiagent learning, specifically using the Optimistic Gradient Descent Ascent, and analyze its convergence to equilibria in the simultaneous version of competitive opinion optimization.

\keywords{Competitive opinion optimization  \and Multiagent learning \and Optimistic Gradient Descent Ascent.}
\end{abstract}

\section{Introduction}
The opinion forming process in a social network can be naturally thought as opinion influencing and updating dynamics.
This already attracted researchers' interest a while ago in mathematical sociology, and recently in theoretical computer science. DeGroot~\cite{degroot} modeled the opinion formation process by associating each individual with a numeric-value opinion and letting the opinion be updated by weighted averaging the opinions of her friends and her own, where the weights represent how much she is influenced by her friends. This update dynamics will converge to a consensus where all individuals hold the same opinions. However, we can easily observe that in the real world, consensus is difficult to be reached. Friedkin and Johnsen~\cite{friedkin} differentiated an expressed opinion that each individual in the networks updates over time from an internal opinion that each individual is born with and stays unchanged. Thus, an individual would always be influenced by her inherent belief, and the dynamics converges to an unique equilibrium, which may not be a consensus.
Bindel et al.~\cite{bindel} viewed the updating rule mentioned above equivalently as each player updating her expressed opinion to minimize her quadratic individual cost function, which consists of the disagreement between her expressed opinion and those of her friends, and the difference between her expressed and internal opinions. They analyzed how socially good or bad the system can be at equilibrium compared to the optimum solution in terms of the price of anarchy \cite{koutsoupias:papadimitriou:anarchy}. For directed graphs, we also had a price-of-anarchy result in a general class of graphs where no node is influencing others much more than being influenced \cite{chen:chen:lu}.

%\subsection{Stackelberg Opinion Optimization}
From a perspective of designing or engineering, \emph{opinion maximization (or minimization)} has been studied for seeding algorithms in \cite{gionis,ahmadinejad}.
With a \emph{linear} objective of the sum of expressed opinions,
opinion maximization seeks to find a $k$-subset (for a fixed size $k$) of nodes to have their expressed opinions fixed to 1 to maximize the objective.
Opinion minimization can be similarly defined to minimize the objective.
A seeding algorithm chooses what subset of nodes to fix their expressed opinions (to 1 if to maximize the objective),
and it turns out that opinion maximization is NP-hard \cite{gionis}. Thus, greedy algorithms \cite{gionis,ahmadinejad} have been designed to approximate the maximum with the help of the submodularity of such a social cost of the expressed opinion sum.

It is obvious to see that controlling the expressed opinions is not the only way to optimize the objective.
It is natural to consider changing the intrinsic (or equivalently, internal) opinions of some subset to optimize the objective.
Notice that setting a selected subset of nodes to have certain assigned internal opinions does not prohibit later deciding their expressed opinions by the influence and update dynamics while controlling the expressed opinions of the chosen subset is definitive.
In this sense, such an ``internal opinion designing" approach is relatively more relaxed, compared to the previously studied ``expressed opinion control"~\cite{gionis}.
What does not make much sense is a ``competitive version" of expressed opinion control where two players, one maximizing the objective and the other minimizing the same objective, forms a two-player zero-sum game by selecting their respective subsets of nodes to add influence values. Our \emph{internal opinion design} has a more meaningful competitive version detailed in the following subsection, where the optimal strategies for the min player and the max player to play are not anymore obvious.

\subsubsection{Our Results: Online and Multiagent Learning for Competitive Opinion Optimization.}
We then define the game of \emph{competitive opinion optimization} as follows.
One can think of a competitive scenario of two players, one with the goal to minimize (or maximize) the objective and the other adversarial player trying to do the opposite thing.
In such competitive opinion optimization, a zero-sum game is formed by these two players with all the possible combination of influence values added on nodes subject to a capacity constraint as the strategy set and each optimizing the same objective in the opposite direction.
The min player minimizes the sum of expressed opinions, i.e., doing internal opinion design discussed above, knowing that the max player as the follower is to maximize the same objective by also her own internal opinion design, and similarly for the max player.
Even if a node is influenced by a player for its intrinsic opinion design, its internal opinion would still be influenced by the other player.
Thus, a node's expressed opinion will be decided by its designed internal opinion (possibly by both players) and the update dynamic.

We first ask the problem of coming up with the min player's Stackelberg strategy against the max player's adversarial strategy as an online optimization problem in essence, specifically an online linear optimization one. From this idea, we design a randomized algorithm simply using the \emph{follow-the-perturbed-leader} algorithm~\cite{kalai:vempala} to produce candidate combinatorial strategies that are distributed according to the underlying probability distribution of mixed strategies and output a randomized combinatorial strategy at some uniformly chosen time step. In the literature of online optimization, the follow-the-perturbed-leader algorithm is a known approach to be adapted to tackle online combinatorial optimization problems, which we discuss more in the related work.
To uniformly randomize over the strategies at different time steps, the min player has to self-simulate playing follow-the-perturbed-leader, which requires obtaining every step's loss that depends on the other adversarial player's play. Thus, estimating the adversary's response strategy is also the min player's job since it is all in the min player's simulation. 
Then, we show that the strategy output by the randomized algorithm for the min player converges to an approximate min strategy against the other adversarial player, mainly taking advantage of the no-regret property in Section~\ref{sec:randomized}.
In other words, the min player using the proposed randomized algorithm to play such a Stackelberg game against the max adversarial player guarantees an approximate minimax equilibrium.
Note that the follow-the-perturbed-leader that we adopt uses uniformly random perturbation in each dimension, which is called the additive version, instead of the multiplicative version~\cite{kalai:vempala} or another version using Gumbel distributed perturbation. Follow-the-perturbed-leader using Gumbel distributed perturbation leads to essentially the exponential weights algorithm (or alternatively, multiplicative weights algorithm)~\cite{lee}. We did \emph{not} try to modify the follow-the-perturbed-leader in any way or devise any new combinatorial online learning algorithm, but simply simulated the play of the follow-the-perturbed-leader for our purpose of outputting the leader's (randomized) combinatorial strategy. The efficiency of our proposed randomized algorithm is inherently guaranteed and, moreover, its no-regret property can be used for our equilibrium strategy analysis.

Furthermore, we view our problem of coming up with the \emph{minimax} strategy and the \emph{maximin} strategy as \emph{multiagent extension of online learning/online convex optimization}.
Note that using generic or specific no-regret algorithms to play is also a common approach to reach certain equilibria \emph{on average} in repeated games~\cite[Chapter 4]{NRTV07}. %\footnote{However, the general result of playing no-regret algorithms in two-player zero-sum matrix game such as in \cite{daskalakis:deckelbaum:kim} is not directly applicable here due to combinatorial nature.} 
Finally, we adapt the Optimistic Gradient Descent Ascent algorithm for the specific problem structure of competitive opinion optimization to derive the dynamics for both the min and max players and analyze the convergence to equilibria in the simultaneous version of competitive opinion optimization with a convergence rate in Section~\ref{sec:learning}.

\subsubsection{Relate Work} \label{sec:related}

\paragraph{Seed Selection Algorithms for Opinion Maximization.}
Using the sum of expressed opinions as the objective,
opinion maximization seeks to find a $k$-subset of nodes to have their expressed opinions fixed to 1 to maximize the objective.
Greedy algorithms have been designed to approximate the optimum with the help of the submodularity of such social cost~\cite{gionis,ahmadinejad}. We can view opinion maximization as a single-player problem compared with our competitive opinion optimization.

\paragraph{Connection to Combinatorial Online Optimization.} Online learning algorithms have been designed for making ``structured" decisions that are composed of components~\cite{koolen}, i.e., combinatorial online optimization problems. There could be an exponential number of decisions in terms of the number of components. To apply the well-known hedge algorithm~\cite{freund:schapire_1997,freund:schapire_1999} to the combinatorial expert setting, the experts are chosen as the structured decisions, which is called the extended hedge algorithm~\cite{koolen}. Obviously, one of the problems of this approach is to maintain exponentially many weights.
Learning with structured decisions has also been dealt with in the bandit setting where only the loss for the structured decision selected is available, which is called a combinatorial bandits problem~\cite{cesa-bianchi}.

\paragraph{No-Regret Play in Games.} It has been studied for two players playing no-regret algorithms to reach mixed Nash equilibrium (minmax equilibrium) in general zero-sum matrix-form games where the strategy set is finite~\cite{daskalakis:deckelbaum:kim}.

\paragraph{Competitive Influence Maximization.}
There are works on competitive versions of various (combinatorial) optimization problem other than competitive opinion optimization that we define in this paper.
The most well-known one is probably competitive influence maximization and its variation \cite{bharathi,goyal,he:kempe}.
Equilibrium computation and the analysis of the price-of-anarchy have been studied in these classes of games.

\section{Preliminaries} \label{sec:prelim}
Our game is based on the opinion formation game and its equilibrium.
First, we introduce the  fundamentals in opinion formation games. Then, we proceed with preliminaries of our competitive opinion optimization in Sect.~\ref{sec:game}. The performance measure for convergence will be introduced in Sect.~\ref{subsec:performance_measure}.

We describe a social network as a weighted graph $(G,\mathbf{w})$ for directed graph $G=(V,E)$ and weight matrix $\mathbf{w}=[w_{ij}]_{ij}$.
The node set $V$ of size $n$ represents the selfish players, and the edge set $E$ corresponds to the relationships between a pair of nodes.
The edge weight $w_{ij}\geq 0$ is a real number and represents how much player~$i$ is influenced by player~$j$;
note that weight $w_{ii}$ can be seen as a self-loop weight, i.e., how much player~$i$ influences (or is influenced by) herself.
Each (node) player has an internal opinion $s_i$, which is unchanged and not affected by opinion updates.
An opinion formation game can be expressed as an instance $(G,\mathbf{w},\mathbf{s})$ that combines weighted graph $(G,\mathbf{w})$ and vector $\mathbf{s}=(s_i)_i$.
Each player's strategy is an expressed opinion $z_i\in [-1,1]$, which may be different from her $s_i\in [-1,1]$ and gets updated.
Both $s_i$ and $z_i$ are real numbers.
The individual cost function of player~$i$ is
\begin{eqnarray*}
C_i(\mathbf{z})&=&w_{ii}(z_i-s_i)^2+\sum_{j\in N(i)}w_{ij}(z_i-z_j)^2
=w_{ii}(z_i-s_i)^2+\sum_j w_{ij}(z_i-z_j)^2,
\end{eqnarray*}
where $\mathbf{z}$ is the strategy profile/vector and $N(i)$ is the set of the neighbors of $i$, i.e., $\{j:j\neq i,w_{ij}>0\}$.
Each node minimizes her cost $C_i$ by choosing her expressed opinion $z_i$.
We analyze the game when it stabilizes, i.e., at equilibrium.

In a \emph{(pure) Nash equilibrium} $\mathbf{z}$, each player~$i$'s strategy is $z_i$ such that given $\mathbf{z}_{-i}$ (i.e., the opinion vector of all players except $i$) for any other $z'_i$,
\begin{equation} \label{eq:equil}
C_i(z_i,\mathbf{z}_{-i}) \leq C_i(z'_i,\mathbf{z}_{-i}).
\end{equation}
%(See the appendix for the definitions of mixed Nash equilibria and correlated equilibria.)
That is equivalently for each player to update her expressed opinion by the following rule \cite{bindel,bhawalkar}:
\begin{equation} \label{eq:update}
z_i=\frac{w_{ii}s_i+\sum_{j\neq i}w_{ij}z_j}{w_{ii}+\sum_{j\neq i}w_{ij}}.
\end{equation}
This is obtained by taking the derivative of $C_i$ w.r.t. $z_i$, setting it to $0$ for each $i$,
and solving the equality system since very player~$i$ minimizes $C_i$.
Note that $C_i$ is continuously differentiable.

%\subsubsection*{Absorbing Random Walks.}
In an opinion formation game, computing Nash equilibrium can be done by using \emph{absorbing random walks}~\cite{doyle:snell}.
In a random walk on a directed graph $H=(Z,R)$ with its weight matrix $W$,
a node in $Z$ is an absorbing node if the random walk can only enter this node but not exit from it,
and each entry $w_{i,j}\in R$ is the weight on edge~$(i,j)$.
Let $B\subseteq Z$ be the set of all \emph{absorbing} nodes, and the remaining nodes $U=Z\setminus B$ are \emph{transient} nodes.
Given the \emph{transition matrix} $P$ (from the weight matrix $W$) whose entry $P_{i,j}$ represents the probability transiting from node $i$ to node $j$  in this random walk, a $|U|\times|B|$ matrix $Q_{UB}$ can be computed
where each entry ${Q_{UB}}_{i,j}$ is \emph{the probability that a random walk starting at transient state $i\in U$ is absorbed at state $j\in B$} (see Appendix~\ref{app:a} for details).
If a random walk starting from transient node $i$ gets absorbed at an absorbing node $j$, we assign to node~$i$ the value~$b_j$ that is associated with absorbing node~$j$.
With $Q_{UB}$, the expected value of node~$i$ is then $f_i=\sum_{j\in B}{Q_{UB}}_{i,j}b_j$.
Let $f_U$ be the vector of the expected values for all $i\in U$ and $f_B$ the vector of values $b_j$ for all $j\in B$.
We have that
\begin{equation}
f_U=Q_{UB}f_B. \label{eq:absorbing}
\end{equation}

Thus, computing the expressed opinion vector at Nash equilibrium for an opinion formation game can be done by taking advantage of Equation~(\ref{eq:absorbing}) on a graph $H=(Z,R)$ constructed for our purpose as follows.
The weighted graph $(G,\mathbf{w})$ with original internal opinions $\mathbf{s}$ gives $U=V$ and $B=V'$ for the random walk on $H$,
where each $u_i\in V$ has a distinct copy $u'_i\in V'$ and $R=E\cup\{(u_i,u'_i):u_i\in V,u'_i\in V'\}$ with each weight $w_{u_i u'_i}=1$ and $f_B=\mathbf{s}$ so $\mathbf{z}=Q_{UB}\mathbf{s}$.
\begin{remark} \label{thm:nash}
The expressed opinion vector at Nash equilibrium can be computed in polynomial time in terms of the number of nodes.
\end{remark}

In the case of \emph{expressed opinion control} for opinion maximization as in \cite[Sect.~3.3]{gionis} by a seed selection algorithm, controlling the set $S\subseteq V$ gives $U=V\setminus S$ and $B=V'\cup S$ with all $b_j=1$ for $j\in S$, i.e., $f_B=(\mathbf{s},\mathbf{1})$, where $\mathbf{1}$ is a all-$1$ vector of size~$|S|$,
since the nodes in~$S$ cannot change their expressed opinions but stick to value~$1$. 
Our competitive \emph{internal opinion design} will be introduced in Sect.~\ref{sec:game}.
There, when using absorbing random walks for arriving at stable states,
$U=V$ and $B=V'$ along with the weighted edges remain as mentioned in the last paragraph yet with~$f_B$ being the internal opinion after manipulation.
Note that the matrix computations involved in computing $f_U$ (computing $Q_{UB}$ included) can done in polynomial time in terms of the number of states in the random walk by Remark~\ref{thm:nash}.

\subsection{Competitive Opinion Optimization} \label{sec:game}
A two-player \emph{competitive opinion optimization} can be described as an instance $((G,\mathbf{w}),\mathbf{s},\mathcal{X},\mathcal{Y},f)$.
We will elaborate each component one by one.
Knowing $\mathbf{y}$, let the min player's strategy be a vector $\mathbf{x}=(x_i)_i\in\mathbb{R}^n$ with $||\mathbf{x}||_1\leq k$.
Knowing $\mathbf{x}$, the max player's strategy is a vector $\mathbf{y}=(y_i)_i\in\mathbb{R}^n$ with $||\mathbf{y}||_1\leq k$.
Let $\mathcal{X}=\{\mathbf{x}\in \mathbb{R}^n:||\mathbf{x}||_1\leq k\}$ and $\mathcal{Y}=\{\mathbf{y}\in\mathbb{R}^n:||\mathbf{y}||_1\leq k\}$ denote the strategy sets for the min player and the max player, respectively.
Thus, for a node~$i$, after its internal opinion is affected by the min player and the max player, its internal opinion becomes \emph{a modified value~$s_i+x_i+y_i$ clipped between $-1$ and $1$}.
That is, for the min player, when $x_i<-1-a_i$ for $a_i=s_i+y_i$, the modified opinion $s_i+x_i+y_i$ stays at $-1$; for the max player, when $y_i>1-b_i$ for $b_i=s_i+x_i$, the modified opinion $s_i+x_i+y_i$ stays at $1$. 
Note that the expressed opinions are still influenced by $\mathbf{s}+\mathbf{x}+\mathbf{y}$ and get updated to the value at stable state by the dynamic,
using absorbing random walks (applying Equation~(\ref{eq:absorbing})).
We consider an objective $C(\mathbf{z})=\sum_i z_i$ that is the sum of all nodes' expressed opinions $z_i$.

The min player minimizes her cost function over all $\mathbf{x}$'s, which the max player maximizes,
\begin{eqnarray} \label{eq:obj}
f(\mathbf{x},\mathbf{y})=C(Q_{UB}(\mathbf{s}+\mathbf{x}+\mathbf{y}))=\ell^\mathbf{T}(\mathbf{s}+\mathbf{x}+\mathbf{y})
\end{eqnarray}
for $U=V$ and $B=V'$ and a vector $\ell=(\sum_i {Q_{UB}}_{i,j})_j$.

\paragraph{Online Linear/Convex Optimization.} \label{sec:olo}
In the setting of online convex optimization, we describe an online game between a player and the environment.
The player is given a convex set $\mathcal{K}\subset \mathbb{R}^d$ and has to make a sequence of decisions $\mathbf{x}^{(1)},\mathbf{x}^{(2)},\ldots \in\mathcal{K}$.
After deciding $\mathbf{x}^{(t)}$, the environment reveals a convex reward function~$f^{(t)}$ and the player obtains $f^{(t)}(\mathbf{x}^{(t)})$.
Which is closely related to our problem is a more specific problem of online linear optimization where the reward functions are linear, i.e., $f^{(t)}(\mathbf{x})=\langle F^{(t)},\mathbf{x}\rangle$ for some $f^{(t)}\in \mathbb{R}^d$.

We define the player's adaptive strategy $\mathcal{L}$ as a function taking as input a subsequence of loss vectors $F^{(1)},\ldots,F^{(t-1)}$ and returns a point $\mathbf{x}^{(t)}\leftarrow\mathcal{L}(F^{(1)},\ldots,F^{(t-1)})$ where $\mathbf{x}^{(t)}\in\mathcal{K}$.
The performance of the player is measured by \emph{regret} defined in the following.
\begin{definition}%[e.g., Definition~9 of \cite{abernethy:bartlett}]
Given an online linear optimization algorithm $\mathcal{L}$ and a sequence of loss vectors $F^{(1)},F^{(2)},\ldots\in \mathbb{R}^n$,
let the regret $\operatorname{Regret}(\mathcal{L};F_{1:T})$ be defined as
\[\sum_{t=1}^T\langle F^{(t)},\mathbf{x}^{(t)}\rangle-\min_{\mathbf{x}\in\mathcal{K}}\sum_{t=1}^T\langle F^{(t)},\mathbf{x}\rangle.
\footnote{For a player maximizing her total reward given a sequence of reward vectors, the regret can also be defined accordingly.}\]
\end{definition}

A desirable property that one would want an online linear optimization algorithm to have is a regret which scales sublinearly in~$T$.
For example, the online gradient descent algorithm \cite{zinkevich} guarantees a regret of~$O(\sqrt{T})$.
This property can be formally captured as the following.
\begin{theorem}[e.g., Theorem 10 of \cite{abernethy:bartlett}] \label{thm:no-regret}
For any bounded decision set $\mathcal{K}\subseteq \mathbb{R}^d$ there exists an algorithm $\mathcal{L}_\mathcal{K}$ such that $\operatorname{Regret}(\mathcal{L}_\mathcal{K})=o(T)$ for any sequence of loss vectors $\{F^{(t)}\}$ with bounded norm.
\end{theorem}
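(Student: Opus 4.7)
The plan is to prove the statement by exhibiting a concrete algorithm---Zinkevich's online projected gradient descent (OGD)---and bounding its regret via a standard potential argument. Since $\mathcal{K}$ is bounded, let $D=\sup_{\mathbf{u},\mathbf{v}\in\mathcal{K}}\|\mathbf{u}-\mathbf{v}\|<\infty$ be its diameter, and since the loss vectors have bounded norm, let $G=\sup_t\|f^{(t)}\|<\infty$; both quantities enter only in the analysis, not in the algorithm's state.

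The algorithm $\mathcal{L}_{\mathcal{K}}$ would pick an arbitrary $\mathbf{x}^{(1)}\in\mathcal{K}$ and, after observing $f^{(t)}$, perform the update
\[\mathbf{x}^{(t+1)}=\Pi_{\mathcal{K}}\bigl(\mathbf{x}^{(t)}-\eta f^{(t)}\bigr),\]
where $\Pi_{\mathcal{K}}$ denotes Euclidean projection onto (the closed convex hull of) $\mathcal{K}$ and $\eta>0$ is a step size to be tuned at the end. The first step of the analysis invokes the non-expansiveness of projection onto a convex set, $\|\Pi_{\mathcal{K}}(\mathbf{u})-\mathbf{v}\|\leq\|\mathbf{u}-\mathbf{v}\|$ for all $\mathbf{v}\in\mathcal{K}$. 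Applied to $\mathbf{u}=\mathbf{x}^{(t)}-\eta f^{(t)}$ against a fixed comparator $\mathbf{x}^{\ast}\in\mathcal{K}$, and then expanding the squared norm, it yields
\[\|\mathbf{x}^{(t+1)}-\mathbf{x}^{\ast}\|^{2}\leq\|\mathbf{x}^{(t)}-\mathbf{x}^{\ast}\|^{2}-2\eta\,\langle f^{(t)},\mathbf{x}^{(t)}-\mathbf{x}^{\ast}\rangle+\eta^{2}\|f^{(t)}\|^{2},\]
which rearranges to bound the per-step gap $\langle f^{(t)},\mathbf{x}^{(t)}-\mathbf{x}^{\ast}\rangle$ by a telescoping potential term plus $\tfrac{\eta}{2}\|f^{(t)}\|^{2}$.

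Next, I would pick $\mathbf{x}^{\ast}\in\arg\min_{\mathbf{x}\in\mathcal{K}}\sum_{t=1}^{T}\langle f^{(t)},\mathbf{x}\rangle$ and sum the per-step bound from $t=1$ to $T$; the potential differences collapse into $\|\mathbf{x}^{(1)}-\mathbf{x}^{\ast}\|^{2}/(2\eta)\leq D^{2}/(2\eta)$, giving
\[\mathrm{Regret}(\mathcal{L}_{\mathcal{K}};f_{1:T})\leq\frac{D^{2}}{2\eta}+\frac{\eta G^{2}T}{2}.\]
Balancing the two terms with $\eta=D/(G\sqrt{T})$ produces regret at most $DG\sqrt{T}=o(T)$; when $T$ is not known in advance, a standard doubling trick or a time-varying step size $\eta_{t}\propto 1/\sqrt{t}$ achieves the same asymptotic rate, so $\mathcal{L}_{\mathcal{K}}$ is well-defined as an online strategy in the sense of the definition.

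The conceptual core of the argument is light, and the only analytic ingredients are the non-expansiveness of projection and the quadratic identity. The main obstacle worth flagging is not in the existence statement itself but in its downstream use: the algorithm $\mathcal{L}_{\mathcal{K}}$ supplied by OGD need not be computationally efficient when $\mathcal{K}$ is the exponentially large combinatorial decision set of interest in Section~\ref{sec:game}, since projection onto such a set can itself be hard. That is precisely why the subsequent development of the paper replaces projection-based updates by a combinatorial surrogate such as follow-the-perturbed-leader, in which the per-round action is produced by a linear optimization oracle over the $k$-subsets rather than by a Euclidean projection.
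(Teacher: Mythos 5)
Your proposal is correct: the standard projected online gradient descent analysis (non-expansiveness of projection, telescoping squared-distance potential, step size $\eta = D/(G\sqrt{T})$) gives the $O(DG\sqrt{T}) = o(T)$ regret bound, and this is exactly the instantiation the paper itself points to, since it cites Zinkevich's online gradient descent as achieving $O(\sqrt{T})$ regret immediately before stating this (imported) theorem rather than proving it. Your closing remark about why the paper then switches to follow-the-perturbed-leader for the combinatorial set is also consistent with the paper's motivation.
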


The no-regret property above is useful in a variety of contexts.
For example, it is known (e.g., \cite[Section 3]{abernethy:bartlett}) that two players playing $o(T)$-regret algorithms $\mathcal{L}_\mathcal{X}$ and $\mathcal{L}_\mathcal{Y}$, respectively, in a zero-sum game with a cost function $f : \mathcal{X}\times \mathcal{Y}\rightarrow R$ of the form $f(\mathbf{x},\mathbf{y}) = \mathbf{x}^T M\mathbf{y}$ for some $M\in \mathbb{R}^{n\times m}$ give a version of minmax equilibrium whose proof is restated in Appendix~\ref{app:b}.
\begin{theorem}[Corollary 3 of \cite{abernethy:bartlett}] \label{thm:minmax}
For compact convex sets $\mathcal{X} \subset \mathbb{R}^n$ and $\mathcal{Y} \subset \mathbb{R}^m$ and any biaffine function\footnote{A biaffine function $f : \mathcal{X} \times \mathcal{Y} \rightarrow R$ satisfies $f(\alpha \mathbf{x} + (1-\alpha)\mathbf{x}',\mathbf{y}) = \alpha f(\mathbf{x},\mathbf{y}) + (1-\alpha)f(\mathbf{x}',\mathbf{y})$ and
$f(\mathbf{x},\alpha \mathbf{y} + (1-\alpha)\mathbf{y}') = \alpha f(\mathbf{x},\mathbf{y}) + (1-\alpha)f(\mathbf{x},\mathbf{y}')$ for every $0 \leq \alpha \leq 1$, $\mathbf{x},\mathbf{x}'\in \mathcal{X}$ and $\mathbf{y},\mathbf{y}'\in \mathcal{Y}$.}
$f : \mathcal{X} \times \mathcal{Y} \rightarrow \mathbb{R}$, we have
\begin{eqnarray}
\min_{\mathbf{x}\in \mathcal{X}}\max_{\mathbf{y}\in \mathcal{Y}}f(\mathbf{x},\mathbf{y})=\max_{\mathbf{y}\in \mathcal{Y}}\min_{\mathbf{x}\in \mathcal{X}}f(\mathbf{x},\mathbf{y}).
\end{eqnarray}
\end{theorem}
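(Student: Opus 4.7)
The plan is to prove the equality by the standard no-regret bootstrap argument, which is the precise reason one values the no-regret property in zero-sum interactions. The easy direction $\max_{\mathbf{y}\in\mathcal{Y}}\min_{\mathbf{x}\in\mathcal{X}} c(\mathbf{x},\mathbf{y}) \leq \min_{\mathbf{x}\in\mathcal{X}}\max_{\mathbf{y}\in\mathcal{Y}} c(\mathbf{x},\mathbf{y})$ is immediate: for any fixed $\mathbf{x}^*\in\mathcal{X}$ and $\mathbf{y}^*\in\mathcal{Y}$, one has $\min_\mathbf{x} c(\mathbf{x},\mathbf{y}^*)\leq c(\mathbf{x}^*,\mathbf{y}^*)\leq \max_\mathbf{y} c(\mathbf{x}^*,\mathbf{y})$, and taking a $\max$ over $\mathbf{y}^*$ on the left and a $\min$ over $\mathbf{x}^*$ on the right gives the inequality. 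The substance lies in the reverse direction.

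For that, I would invoke the preceding no-regret theorem twice to obtain a min-player algorithm $\mathcal{L}_\mathcal{X}$ and a max-player algorithm $\mathcal{L}_\mathcal{Y}$, each with regret $o(T)$, and have them play against each other for $T$ rounds. At round $t$, the min player produces $\mathbf{x}^{(t)}=\mathcal{L}_\mathcal{X}(c(\cdot,\mathbf{y}^{(1)}),\dots,c(\cdot,\mathbf{y}^{(t-1)}))$ and the max player produces $\mathbf{y}^{(t)}=\mathcal{L}_\mathcal{Y}(c(\mathbf{x}^{(1)},\cdot),\dots,c(\mathbf{x}^{(t-1)},\cdot))$; biaffineness makes each of these a genuine linear functional of the corresponding variable, so everything sits inside the online linear optimization framework of Section~\ref{sec:olo}.

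Next, form the averages $\bar{\mathbf{x}}=\frac{1}{T}\sum_{t=1}^T \mathbf{x}^{(t)}\in\mathcal{X}$ and $\bar{\mathbf{y}}=\frac{1}{T}\sum_{t=1}^T \mathbf{y}^{(t)}\in\mathcal{Y}$, which live in the sets by convexity. Biaffineness of $c$ allows me to pull averages in and out of either argument, giving $\frac{1}{T}\sum_t c(\mathbf{x},\mathbf{y}^{(t)})=c(\mathbf{x},\bar{\mathbf{y}})$ and $\frac{1}{T}\sum_t c(\mathbf{x}^{(t)},\mathbf{y})=c(\bar{\mathbf{x}},\mathbf{y})$. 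Dividing the two regret bounds by $T$ then yields
\begin{equation*}
\max_{\mathbf{y}\in\mathcal{Y}} c(\bar{\mathbf{x}},\mathbf{y})-o(1)\;\leq\;\tfrac{1}{T}\sum_{t=1}^T c(\mathbf{x}^{(t)},\mathbf{y}^{(t)})\;\leq\;\min_{\mathbf{x}\in\mathcal{X}} c(\mathbf{x},\bar{\mathbf{y}})+o(1),
\end{equation*}
so chaining gives $\min_\mathbf{x}\max_\mathbf{y} c(\mathbf{x},\mathbf{y})\leq \max_\mathbf{y} c(\bar{\mathbf{x}},\mathbf{y})\leq \min_\mathbf{x} c(\mathbf{x},\bar{\mathbf{y}}) + o(1)\leq \max_\mathbf{y}\min_\mathbf{x} c(\mathbf{x},\mathbf{y})+o(1)$. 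Letting $T\to\infty$ closes the chain and, combined with the easy direction, establishes equality.

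The only real obstacle is a bookkeeping one, namely verifying the ``bounded norm loss vectors'' hypothesis of the invoked no-regret theorem for both algorithms simultaneously. This reduces to showing that the families $\{c(\cdot,\mathbf{y}):\mathbf{y}\in\mathcal{Y}\}$ and $\{c(\mathbf{x},\cdot):\mathbf{x}\in\mathcal{X}\}$ are uniformly bounded as linear functionals, which follows from the compactness of $\mathcal{X}$ and $\mathcal{Y}$ together with the joint continuity of the biaffine $c$. Everything else is the algebraic manipulation afforded by biaffineness together with the $o(T)$ regret guarantee.
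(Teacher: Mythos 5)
Your proposal is correct and follows essentially the same route as the paper: both players run $o(T)$-regret algorithms against each other, the two regret bounds sandwich the average cost $\frac{1}{T}\sum_t c(\mathbf{x}^{(t)},\mathbf{y}^{(t)})$ between $\min_{\mathbf{x}}\max_{\mathbf{y}}$ and $\max_{\mathbf{y}}\min_{\mathbf{x}}$ up to $o(1)$, and letting $T\to\infty$ combined with weak duality closes the gap. The only differences are cosmetic: you work with a general biaffine $c$ and spell out the easy direction and the bounded-norm hypothesis, whereas the paper instantiates $c(\mathbf{x},\mathbf{y})=\mathbf{x}^{T}M\mathbf{y}$ and cites weak duality without proof.
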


This standard technique and result have been existing for playing generic no-regret algorithms in a zero-sum $n\times m$ matrix game.
One can view the argument as something we would like to do on a high level but with different technical details for coping with our game.

For our competitive opinion optimization, one can first notice that the strategies of the two players interact with each other and matrix $Q_{UB}$, which corresponds to the cost matrix $M$, in a very different way from the standard result discussed in Theorem~\ref{thm:minmax}.
For example, we have $Q_{UB}(\mathbf{s}+\mathbf{x}+\mathbf{y})$ here instead of $\mathbf{x}^T M \mathbf{y}$.
Given the strategy sets $\mathcal{X}$ and $\mathcal{Y}$,
our competitive opinion optimization takes a cost function $f: \mathcal{Z}=\mathcal{X}\times \mathcal{Y}\rightarrow R$ defined in (\ref{eq:obj}).
The min player imagines engaging herself in an online optimization problem and the max player in the other online optimization problem, where at time $t$ the min player chooses $\mathbf{x}^{(t)}$ and the max player chooses $\mathbf{y}^{(t)}$.
In one online optimization, \emph{the min player chooses her strategies according to the play of a \emph{no-regret} algorithm, assuming that the max player in response maximizes the value of the objective each time step, and in the other online optimization, the max player does similarly.} These two directions of optimization can actually be done jointly and simultaneously, which will be detailed in Section~\ref{sec:learning}.

However, there are subtleties that need to be coped with. One can view Theorem~\ref{thm:minmax} (Corollary 3 of \cite{abernethy:bartlett}) as something we would like to do on a high level \emph{approximately} but with different technical details for coping with our competitive opinion optimization since we have $f(\mathbf{x},\mathbf{y})=\ell^\mathbf{T}(\mathbf{s}+\mathbf{x}+\mathbf{y})$. In particular, for a node~$i$, after its internal opinion is affected by the min player and the max player, \emph{without} clipping its modified internal opinion value~$s_i+x_i+y_i$ clipped to the range $[-1,1]$ every node's equilibrium strategy would result in a \emph{dominant strategy} solution, which is a special case of Nash equilibria and a less interesting target to look for since the strategies of the min player and the max player would not be mutually entangled.

\subsection{Performance Measure for Convergence}
\label{subsec:performance_measure}

For the notation $\mathcal{Z}$, which is a concatenation of~$\mathcal{X}$ and~$\mathcal{Y}$, and accordingly $\mathcal{Z}^{*}$ is also a concatenation of~$\mathcal{X}^{*}$ and~$\mathcal{Y}^{*}$ only for convenience. 
For a point, $(\mathbf{x}, \mathbf{y})$ as a strategy profile can be directly represented by $\mathbf{z}\in\mathcal{Z}$, and define $F(\mathbf{z})=\left(\nabla_{\mathbf{x}} f(\mathbf{x}, \mathbf{y}),-\nabla_{\mathbf{y}} f(\mathbf{x}, \mathbf{y})\right)$.

In this minimax equilibrium game, there are many ways to measure the convergence performance, and some unique distances are often used to estimate the convergence rate. %In this section, we introduce some common measures, such as the Duality Gap denoted as $\alpha_{f}\left(\mathbf{z}\right)$ and the $\operatorname{dist}\left(\mathbf{z}, \mathcal{Z}^*\right)$.
The duality gap, defined as $\alpha_{f}(\mathbf{z})=\max _{\mathbf{y}^{\prime} \in \mathcal{Y}} f\left(\mathbf{x}, \mathbf{y}^{\prime}\right)-\min _{\mathbf{x}^{\prime} \in \mathcal{X}} f\left(\mathbf{x}^{\prime}, \mathbf{y}\right)$, which is always a positive value. This term has been used in many works, either for theorem proof or numerical experiments. The notation $\operatorname{dist}\left(\mathbf{z},\mathcal{Z}^{*}\right)$ is the squared distance between $\mathbf{z}$ and $\mathcal{Z}^{*}$, which can be formulated as $\left\|\mathbf{z}-\Pi_{\mathcal{Z}^*}(\mathbf{z})\right\|^{2}$, and we denote by~$\Pi_{\mathcal{Z}^*}(\mathbf{z}) := \operatorname{argmin}_{\mathbf{z}\in\mathcal{Z}^*} \operatorname{dist}\left(\mathbf{z}, \mathcal{Z}^{*}\right)$ the projection of~$\mathbf{z}$ onto~$\mathcal{Z}^*$. %The reason why the two measure methods can be used is that they all aim to find a point $(x, y)$ that is close to equilibria. We use these measures in this paper.

\section{Randomized Algorithm for the Leader's Strategy} \label{sec:randomized}
For our opinion optimization game, one can first notice that the strategies of the two players interact with each other and matrix $Q_{UB}$, which corresponds to the cost matrix~$M$, in a very different way from the standard result discussed in Section~\ref{sec:olo}.
For example, we have $Q_{UB}(\mathbf{s}+\mathbf{x}+\mathbf{y})$ here instead of $\mathbf{x}^T M \mathbf{y}$.
Being aware of the differences, in this section we henceforth design algorithms for computing an approximate equilibrium strategy of the min player (against the adversarial player),
and focus on efficient computation of the adversary's strategy as well as the equilibrium strategy analysis only for the min player, instead of characterizing equilibrium, i.e., equilibrium strategies for both players
(since the adversarial player can overwrite the min player's selection and we do not have a symmetric structure such as $\mathbf{x}^T M \mathbf{y}$ in our problem).

 %Let $X=\{x\in\mathbb{R}^n:||x||_0=k, x_i\neq 0\Rightarrow x_i=1\}$ and $Y=\{y\in\mathbb{R}^n:||y||_0=k, y_i\neq 0\Rightarrow y_i=-1\}$.
Given the strategy sets $\mathcal{X}$ and $\mathcal{Y}$,
our Stackelberg opinion optimization game takes a cost function $f: \mathcal{X}\times \mathcal{Y}\rightarrow \mathbb{R}$ defined in Section~\ref{sec:game}.
The min player imagines engaging herself in an online optimization problem, where at time $t$ the min player chooses $\mathbf{x}^{(t)}$ and the adversarial (max) player chooses $\mathbf{y}^{(t)}$.
In such online optimization, \emph{the min player chooses her strategies according to simulating the play of a \emph{no-regret} algorithm, assuming that the adversarial player in response maximizes the value of the objective each time step.} Also, the min player can select a strategy at some time $T_{\min}$ chosen uniformly at random, and this randomized strategy will be shown to be an approximate equilibrium strategy.

\subsection{Simulating the Play of the Follow-the-Perturbed-Leader Algorithm}
%Since our strategies are combinatorial in nature, i.e., $|\mathcal{X}|=|\mathcal{Y}|=C^n_k$, the probabilities in a distribution over all the strategies are exponentially many. The probability distribution needs to be expressed implicitly as a ``randomized algorithm", instead of being expressed explicitly as a long vector of the probability distribution. Let $X^{t}$ denote this long vector of the probability distribution at time step $t$.
Specifically, transforming our problem of finding the leader's strategy into an online linear optimization, we simulate playing the additive version of the follow-the-perturbed-leader algorithm \cite{kalai:vempala} %(restated in Appendix~\ref{app:c}) 
to obtain a sequence of ``randomized" (fractional) strategies; we then get an ``average" (over time steps) randomized (fractional) strategy as the output of our randomized algorithm.

For every time step $t$, the (fractional) strategy of the leader $\mathbf{x}^{(t)}\leftarrow\mathcal{L}_\mathcal{X}(f^{(1)}(\cdot),\ldots,f^{(t-1)}(\cdot))$ can be rounded into the integral strategy~$\mathbf{x}^{(t)}$, noting $\mathcal{X}=\{\mathbf{x}\in \mathbb{R}^n: \|\mathbf{x}\|_0\leq k, 0\leq\bar{x}_i\leq 1\}$ and $\mathcal{L}_\mathcal{X}$ is the additive version of the follow-the-perturbed-leader algorithm, and estimating
\[\mathbf{y}^{(t)}\simeq\arg\max_{\mathbf{y}\in \mathcal{Y}}f(\mathbf{E}[\mathbf{x}^{(t)}],\mathbf{y}).\footnote{We use the notation $\mathbf{E}[\mathbf{x}]$ to denote an expected vector throughout this paper.}\]
%will be explained in Appendix~\ref{app:g}.
%As in online linear optimization, we need to define the loss function for the min player's strategy first. Let \[N^{(\tau)}=\{i:y^{(\tau)}_i\neq 0\}\] be the $k$-subset that the adversarial player selects at time step $\tau$. Fixing the adversarial player's strategy $\mathbf{y}^{(\tau)}$ and thereby determining $N^{(\tau)}$, we can write the loss function as an affine function of the min player's strategy $\mathbf{x}$ for $\tau$ from 1 to $t-1$:
%\begin{equation}\label{eq:loss}
%f^{(\tau)}(\mathbf{x})=f(\mathbf{x},\mathbf{y}^{(\tau)})=\sum_{i\in\{1,\ldots,n\}\setminus N^{(\tau)}}\ell_i(x_i+s_i)+\sum_{i\in N^{(\tau)}}\ell_i.
%\end{equation}

\paragraph{Algorithm Design.} We are now ready to specify simulating the play of the additive version of the follow-the-perturbed-leader algorithm plus the rounding for getting a randomized combinatorial strategy $\mathbf{x}^{(t)}$ at each time step $t$: the min player's fractional strategy at time step $t$ is
\begin{eqnarray*}
&&\mathbf{x}^{(t)}
=\arg\min_{\mathbf{x}\in\mathcal{X}}L^{(t-1)}(\mathbf{x})
=\arg\min_{\mathbf{x}\in\mathcal{X}}(\sum_{\tau=1}^{t-1}f^{(\tau)}(\mathbf{x})+R_t\mathbf{x}),%\\
%&=&\arg\min_{\mathbf{x}\in \mathcal{\bar{X}}}(\sum_{\tau=1}^{t-1}\sum_{i\in\{1,\ldots,n\}\setminus N^{(\tau)}}\ell_i(x_i+s_i)+\sum_{i\in N^{(\tau)}}\ell_i+R_t\mathbf{x}),
\end{eqnarray*}
for a random vector $R_t\in[0,\sqrt{T}]^{n}$ uniformly distributed in each dimension.
Since $L^{(t-1)}$ is affine in $\mathbf{x}\in\mathcal{X}$ and the constraints forming $\mathcal{X}$ are linear as well, the minimizer $\mathbf{x}^{(t)}$ can be computed efficiently.
Actually, $\textbf{E}_{\mathbf{x}^{(t)}\sim X^{(t)}}[\mathbf{x}^{(t)}]$ can be estimated by sampling $\mathbf{x}^{(t)}$ enough times,
which we will use and explain in Appendix~\ref{app:g}.
Thus, we conclude that our randomized algorithm outputs a (randomized) pure strategy in a uniformly random time step $T_{\min}$ for the min player against the adversarial player who ideally is to play $\arg\max_{\mathbf{y}\in \mathcal{Y}}g(\mathbf{E}_{\mathbf{x}^{(t)}\sim X^{(t)}}[\mathbf{x}^{(t)}],\mathbf{y})$ at each time step $t$.
\begin{proposition} \label{pro:adversary}
We can estimate this strategy of the adversary as accurately as possible with high probability.
That is, with high probability
\[f(\mathbf{E}_{\mathbf{x}^{(t)}\sim \mathbf{x}^{(t)}}[\mathbf{x}^{(t)}],\mathbf{y}^{(t)})\geq\arg\max_{\mathbf{y}\in \mathcal{Y}}f(\mathbf{E}_{\mathbf{x}^{(t)}\sim \mathbf{x}^{(t)}}[\mathbf{x}^{(t)}],\mathbf{y})-\epsilon,\]
where $\epsilon>0$ is an error from estimation, which can be made as small as desired.
\end{proposition}
We show that such strategy $\mathbf{y}^{(t)}$ of the adversary can be found efficiently in the Appendix~\ref{app:g}.
The randomized algorithm simulates the follow-the-perturbed-leader algorithm up to time step $T_{\min}$.
Our main result is to show that the randomized pure strategy indeed approaches an approximate minimax equilibrium strategy (see Section~\ref{sec:equilibrium}).
The randomized algorithm is summarized as follows.

\begin{algorithm} \label{alg:1}
\caption{Randomized algorithm for the leader's strategy}
\begin{algorithmic}[1]
\STATE Choose $T_{\min}$ uniformly at random from $\{1,\ldots,T\}$
\FOR{$t=1$ to $T_{\min}$}
\STATE $\mathbf{x}^{(t)}=\arg\min_{\mathbf{x}\in\mathcal{X}}(\sum_{\tau=1}^{t-1}f^{(\tau)}(\mathbf{x})+R_t\mathbf{x})$ for a uniformly random (in each dimension) vector $R_t\backsim U[0,\sqrt{T}]^n$,
where the adversary's $\mathbf{y}^{(\tau)}$ %that determines $N^{(\tau)}$ 
can be efficiently computed. %using the procedure described in the following subsection.
%\State Set $\mathbf{x}^{(t)}_i=-s_i-1$ with probability of $\bar{x}^{(t)}_i$ and $\mathbf{x}^{(t)}_i=0$ with probability of $1-\bar{x}^{(t)}_i$, and repeat the process $\log n$ times.
\STATE Estimating the adversary's strategy $\mathbf{y}^{(t)}$ %(thereby, obtaining $N^{(t)}$)
to achieve Proposition~\ref{pro:adversary}.
\ENDFOR
%\STATE Output the realized strategy $\mathbf{\hat{\mathbf{x}}}^{(T_{\min})}$

\end{algorithmic}
\end{algorithm}

\subsection{Equilibrium Strategy Analysis} \label{sec:equilibrium}
Let the min player play the strategy output by the randomized algorithm and
the adversarial player's strategy be the one maximizing the loss, given the min player's chosen strategy.
First, it can be shown that the play output by the randomized algorithm is ``nearly" $O(\frac{n^{3/2}}{\sqrt{T}})$-average regret with high probability (see the proof of Lemma~\ref{eq:no-regret}).
This is achieved naturally in the sense of expected losses of the min player since there is a random vector $R_t$ as a random source that produces the distribution $\mathbf{x}^{(t)}$. The proof is deffered to Appendix~\ref{app:h}.

\begin{lemma} \label{eq:no-regret}
For the min player, the follow-the-perturbed-leader algorithm is nearly $\frac{2(1+\delta)n^{3/2}}{\sqrt{T}}$-average regret w.r.t. her respective loss functions depending on the adversary's strategy $\mathbf{y}^{(t)}$'s.%, i.e.,
%\begin{eqnarray*}
%&&\frac{1}{T}\sum_{t=1}^T \mathbf{E}_{\mathbf{x}^{(t)}\sim \mathbf{x}^{(t)}}[f^{(t)}(\mathbf{x}^{(t)})]=\frac{1}{T}\sum_{t=1}^T E_{\mathbf{x}^{(t)}\sim \mathbf{x}^{(t)}}[g(\mathbf{x}^{(t)},\mathbf{y}^{(t)})]\\
%&\leq&\frac{1}{T}\min_{\mathbf{x}\in \mathcal{X}}\sum_{t=1}^T f^{(t)}(\mathbf{x})+\frac{2n^{3/2}}{\sqrt{T}}=\frac{1}{T}\min_{\mathbf{x}\in \mathcal{X}}\sum_{t=1}^T g(\mathbf{x},\mathbf{y}^{(t)})+\frac{2n^{3/2}}{\sqrt{T}}.
%\end{eqnarray*}
\end{lemma}

Since the randomized algorithm chooses time step $T_{\min}$ uniformly at random from $1,\ldots,T$, we let
\[\mathbf{E}_{T_{\min}\in\{1,\ldots,T\},\mathbf{x}^{(T_{\min})}\sim X^{(T_{\min})}}[\mathbf{x}^{(T_{\min})}]
=\frac{\sum_{t=1}^T\mathbf{E}_{\mathbf{x}^{(t)}\sim \mathbf{x}^{(t)}}[\mathbf{x}^{(t)}]}{T}.\]
Then, we are ready to state the main result whose proof is detailed in Appendix~\ref{app:c}.
\begin{theorem} \label{thm:strategy}
The strategy $\mathbf{x}^{(T_{\min})}$ output by the randomized algorithm for the min player against the adversarial player %, which has the $O(\frac{1}{\sqrt{T}})$-average regret property, 
is a $(1+\delta,\frac{\sqrt{\ln T}+2(1+\delta)n^{3/2}}{\sqrt{T}})$-approximate equilibrium strategy for $(1+\delta)$ multiplicative approximation and $(\frac{\sqrt{\ln T}+2(1+\delta)n^{3/2}}{\sqrt{T}})$ additive approximation with probability of at least $1-\frac{2}{T}$ for some constant $0<\delta\leq 1$.
\end{theorem}

\section{Multiagent Learning: Optimistic Mirror Descent Ascent for Simultaneous Competitive Opinion Optimization} \label{sec:learning}

Combining the play of a specific no-regret algorithm, say \emph{the optimistic mirror descent, for the min player} and that of a specific no-regret algorithm, say \emph{the optimistic mirror ascent, for the max player}, we propose to use the Optimistic Mirror Descent Ascent (OMDA) algorithm \cite{chiang:yang,hsieh,golowich,wei:lee:zhang:luo}. %, one of whose versions is given in the following for your reference.
%\begin{algorithm} \label{alg:agile}
%\caption{Optimistic Mirror Descent Ascent Algorithm (An adapted version of \cite[Algorithm~1]{chiang:yang})}
%\begin{algorithmic}[1]
%\State Let $\mathbf{z}_1=\hat{\mathbf{z}_1}=(1/n,\ldots,1/n)^{\top}$
%\For{$t\in[T]$}
%\State Receive $f_t$ and compute $F_t=\nabla f_t(\hat{\mathbf{z}}_t)$.
%\State update\\
%$\mathbf{z}_{t+1}=\arg\min_{\mathbf{z}\in\mathcal{Z}}\mathcal{B}^\mathcal{R}(\mathbf{z},\tilde{\mathbf{z}}_{t+1})$ for $\nabla\mathcal{R}(\tilde{\mathbf{z}}_{t+1})=\nabla \mathcal{R}(\mathbf{z}_t)-\eta_t F_t$,\\ 
%$\hat{\mathbf{z}}_{t+1}=\arg\min_{\hat{\mathbf{z}}\in\mathcal{K}}\mathcal{B}^\mathcal{R}(\hat{\mathbf{z}},\tilde{\mathbf{z}}_{t+1})$ for $\nabla \mathcal{R}(\tilde{\mathbf{z}}_{t+1})=\nabla \mathcal{R}(\mathbf{z}_{t+1})-\eta_t F_t$.
%\EndFor
%\end{algorithmic}
%\end{algorithm}
Note that in the literature of online learning, the Optimistic Weights Update (OMWU) algorithm and Optimistic Gradient Dsecent Ascent (OGDA) algorithm can be viewed as special cases of Optimistic Mirror Descent Ascent. Specifically, let the negative entropy~$\sum_i u_i\ln u_i$ be the regularizer~$\mathcal{R}(\mathbf{u})$ for the case of OMWU and half of the $l_2$ norm square ~$\frac{1}{2}\|\mathbf{u}\|^2_2$ for the case of OGDA so that the Bregman divergence~$\mathcal{B}^\mathcal{R}(\mathbf{u},\mathbf{v})$ is KL-divergence~$KL(\mathbf{u},\mathbf{v})$ and $\frac{1}{2}\|\mathbf{u}-\mathbf{v}\|^2_2$, respectively.

We adapt the OGDA \cite{wei:lee:zhang:luo}, which guarantees the last-iterate convergence, including bimatrix and convex-concave settings. OGDA plays a crucial role in computing $\mathbf{x}_{t}, \hat{\mathbf{x}}_{t+1}$ via gradient of $f\left(\mathbf{x}_{t-1}, \mathbf{y}_{t-1}\right)$ and $f\left(\mathbf{x}_{t}, \mathbf{y}_{t}\right)$ (and similarly $\mathbf{y}_{t}, \hat{\mathbf{y}}_{t+1}$ , using gradient of $f_{\mathbf{y}}\left( \mathbf{x}_{t-1}, \mathbf{y}_{t-1}\right)$ and $f_{\mathbf{y}}\left(\mathbf{x}_{t}, \mathbf{y}_{t}\right)$). %\footnote{Updates of OGDA are performed in parallel among nodes. A variant of sequential updates might be interesting to study, too.}
At the end of each iteration~$t$, the objective function $\sum_{i}(\sum_{j} Q_{U B}(i, j) \cdot \textbf{clip}_j(s_{j} +x(j)+y(j)))=(\sum_{j} (\sum_{i}Q_{U B}(i, j)) \cdot \textbf{clip}_j(s_{j} +x(j)+y(j)))$ is updated to obtain the convergence to the minimax equilibrium with a convergence rate in our competitive opinion optimization.

\begin{algorithm}[ht]
  \caption{Optimistic Gradient Descent Ascent for Competitive Opinion Optimization}
    \begin{algorithmic}[1]
    \STATE $\mathbf{Parameters:}$ $i$ (index of a node), $\eta>0$, vector $\mathbf{s}=\left(s_{i}\right)_{i},|i| \times|j|$ matrix $Q_{U B}$
    \STATE $\mathbf{Initialization:}$ $$\begin{gathered}\mathbf{x}_{0}, \hat{\mathbf{x}}_{1},\left(x_{i}\right)_{i} \in \mathbb{R}^{n},\|\mathbf{x}\|_{1} \leq k, \\
    \mathbf{y}_{0}, \widehat{\mathbf{y}}_{1},\left(y_{i}\right)_{i} \in \mathbb{R}^{n},\|\mathbf{y}\|_{1} \leq k.\end{gathered}$$

    \FOR{$t = 1, \ldots, T$}
      \STATE if (GDA): \quad
      $\widehat{\mathbf{x}_{t+1}} = \mathbf{x}_{t},\quad
      \widehat{\mathbf{y}_{t+1}} = \mathbf{y}_{t} ,
      $
      %\State update the objective function and pay attention to the clipping mechanism:
      %$$\sum_{i}(\sum_{j} Q_{U B}(i, j) \cdot \textbf{clip}_j(s_{j} +\mathbf{x}_{t}(j)+\mathbf{y}_{t}(j)))$$
      \STATE Update the min player's strategy and the max player's strategy:
        $$
        \begin{aligned}
        &\mathbf{x}_{t}=\prod_{\mathbf{x}}\left(\widehat{\mathbf{x}_{t}}-\eta \nabla_{\mathbf{x}} \bar{f}\left(\mathbf{x}_{t-1}, \mathbf{y}_{t-1}\right)\right), 
        &\widehat{\mathbf{x}_{t+1}}=\prod_{\mathbf{x}}\left(\widehat{\mathbf{x}_{t}}-\eta \nabla_{\mathbf{x}} \bar{f}\left(\mathbf{x}_{t}, \mathbf{y}_{t}\right)\right),\\
        &\mathbf{y}_{t}=\prod_{\mathbf{y}}\left(\widehat{\mathbf{y}_{t}}+\eta \nabla_{\mathbf{y}} \bar{f}\left(\mathbf{x}_{t-1}, \mathbf{y}_{t-1}\right)\right), 
        &\widehat{\mathbf{y}_{t+1}}=\prod_{\mathbf{y}}\left(\widehat{\mathbf{y}_{t}}+\eta \nabla_{\mathbf{y}} \bar{f}\left(\mathbf{x}_{t}, \mathbf{y}_{t}\right)\right).
        \end{aligned}
        $$
       \STATE The gradient vector of $\bar{f}$ with respect to the min player's strategy is computed as follows:
        $$
        \left(\left\{\begin{array}{cl}
        \sum_i Q_{U B}(i, j)=\ell_j, & \text { if } \mathbf{x}_{t}(j) > -1-\left(s_{j}+\mathbf{y}_{t}(j)\right) \\
        \text{non-differentiable},& \text { if } \mathbf{x}_{t}(j) = -1-\left(s_{j}+\mathbf{y}_{t}(j)\right)\\
        0, & \text{ if } \mathbf{x}_{t}(j) < -1-\left(s_{j}+\mathbf{y}_{t}(j)\right)
        \end{array}\right.\right),
        $$ where $\beta \cdot \sum_i Q_{U B}(i, j)$ for any $\beta\in\left[ 0,1 \right]$ is a subgradient with respect to node~$j$'s strategy if $\mathbf{x}_{t}(j) = -1-\left(s_{j}+\mathbf{y}_{t}(j)\right)$.\\
       And that w.r.t. the max player's is computed similarly.
    \ENDFOR
  \end{algorithmic}
  \label{alg:algo1}
\end{algorithm}

%However, since selecting a combination of influence values subject to a capacity $k$ to change the internal opinions of the nodes to minimize the objective is a combinatorial problem, we cannot directly apply the property of playing the no-regret algorithm in such a zero-sum game. Even if we can use the no-regret algorithm to guarantee that we can get a minimax equilibrium, due to the clipping mechanism, the graph of our objective function cannot be turned into a fully convex or concave one in the competitive opinion optimization.

\subsection{Convergence Results for OGDA}
Therefore, using the objective function of each dimension ($\ell_{i} \cdot \textbf{clip}_{i}\left(s_{i}+x_{i}+y_{i}\right)$), we want to approximate the original three-piecewise functions to strongly convex and strongly concave functions. The following proposition makes the approximation between $f_{\mathbf{y}}(x)$ (original three-piecewise linear function) and $\bar{f}_{\mathbf{y}}(\mathbf{x})$ (two-piecewise linear function), from the coordinate diagram, in which the x-axis is the value of $x_i$, and the y-axis is the value of $\ell_{i} \cdot \textbf{clip}_{i}\left(s_{i}+x_{i}+y_{i}\right)$, we can obtain the distance bound between the two functions at most $k$ (capacity limit) which is much smaller than $n$ whose proof is in Appendix~\ref{app:d}. 

\begin{proposition}[$\bar{f}$ as an approximation of $f$] \label{thm:approx}
Given $y$, we have $ \bar{f}_{\mathbf{y}}(\mathbf{x})-f_{\mathbf{y}}(\mathbf{x}) \leq k$ for all $\mathbf{x}$.
\end{proposition}
\begin{figure} [h]
\centering  
\includegraphics[height=5.5cm,width=7cm]{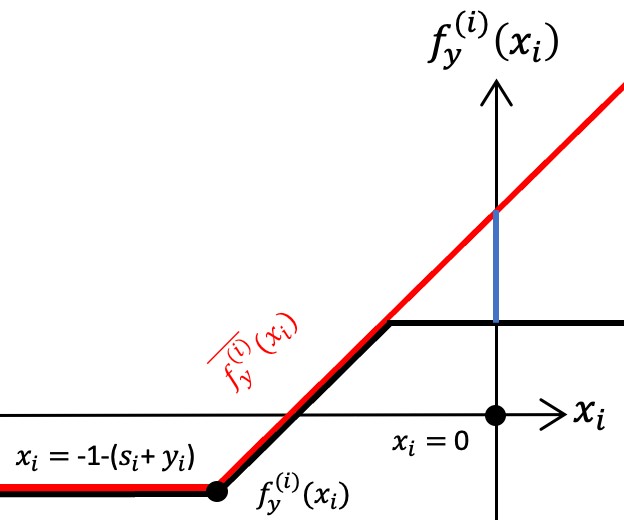}  
\caption{Diagram of ${f}_{\mathbf{y}}^{(i)}(x_{i})$ and $\bar{f}_{\mathbf{y}}^{(i)}(x_{i})$}  
\label{fig:figure3.1}  
\end{figure} 

The next proposition makes the approximation between $\tilde{f}_{\mathbf{y}}(\mathbf{x})$ (convex function) and $\bar{f}_{\mathbf{y}}(\mathbf{x})$ (two-piecewise linear function), we characterize the distance between the two functions in Proposition~\ref{thm:distance}.

\begin{proposition}[$\tilde{f}$ as an approximation of $\bar{f}$] \label{thm:distance}
For $x$ with $\|x\|_{1} \leq k$, we have 
\begin{eqnarray*}
&&\tilde{f}_y(x)-\bar{f}_y(x)=\sum_i\tilde{f}_{y}^{(i)}(x_i)-\sum_i\bar{f}^{(i)}_y(x_i)\\
&\leq&\sum_i\bar{f}_y\left(x_q(i)\right)-\big(\tilde{f}_{y}^{(i)}\left(x_p(i)\right)+\langle\nabla\tilde{f}_{y}^{(i)}\left(x_p(i)\right), x_q(i)-x_p(i)\rangle\big),
\end{eqnarray*}
 where $x_q(i)=-1-(s_{i}+y_{i})$ and $x_p(i)$ is a minimizer of $\tilde{f}_{y}^{(i)}(x_{i})$. And for $y$ with $\|y\|_1\leq k$, $\sum_{i}\tilde{f}_{x}^{(i)}(y_p(i))+\langle\nabla\tilde{f}_{x}^{(i)}\left(y_p(i)\right),y_q(i)-y_p(i) \rangle -\tilde{f}_{x}^{(i)}(y_q(i))$ represents the maximum overall distance between $\bar{f}_{{x}}(y)$ and $\tilde{f}_{{x}}(y)$% (shown in Figure \ref{fig:figure3.3})
, where $y_q(i)=1-(s_{i}+x_{i})$ and $y_p(i)$ is a maximizer of $\tilde{f}_{x}^{(i)}(y_{i})$. 
\end{proposition}

%\begin{proof} 
Given a differentiable convex function $\tilde{f}_{\mathbf{y}}^{(i)}(x_{i})$, we assume that the linear approximation of $\tilde{f}_{\mathbf{y}}^{(i)}(x_{i})$ is the line that lies flat in $\bar{f}_{\mathbf{y}}^{(i)}(x_{i})$. We can find in the range of ${x}_i$ that the largest distance between $\bar{f}_{\mathbf{y}}^{(i)}(x_{i})$ and $\tilde{f}_{\mathbf{y}}^{(i)}(x_{i})$ is its turning point, $-1-(s_{i}+y_{i})$ in our game, to the intersection of its vertical line with $\tilde{f}_{\mathbf{y}}^{(i)}(x_{i})$%(the blue line in Figure \ref{fig:figure3.2})
, which is exactly the Bregman Divergence by definition. 
%The Bregman Divergence can actually be understood as the difference between the convex function $f$ and its first-order Taylor approximation, so we use the notation $D$ with the subscript $\tilde{f}$, $D_{\tilde{f}}({x}_{p}, {x}_{q})$, represents the maximum overall divergence ( $\sum_{i}\big(\tilde{f}_{\mathbf{y}}^{(i)}\left((x_{i})_{q}\right)-\big(\tilde{f}_{\mathbf{y}}^{(i)}\left((x_{i})_{p}\right)+\langle\nabla\tilde{f}_{\mathbf{y}}^{(i)}\left((x_{i})_{p}\right), (x_{i})_{q}-(x_{i})_{p}\rangle\big)\big)$ ), where $(x_{i})_{q}$ is equal to $-1-(s_{i}+y_{i})$ and $(x_{i})_{p}$ is the minimum of $\tilde{f}_{\mathbf{y}}^{(i)}(x_{i})$ under assumption.

%\renewcommand{\thefigure}{3.2}
%\begin{figure} [h]
%\centering  
%\includegraphics[height=7.5cm,width=10cm]{figures/proposition2.jpg}  
%\caption{Simplified diagram of $\bar{f}_{\mathbf{y}}^{(i)}(x_{i})$ and $\tilde{f}_{\mathbf{y}}^{(i)}(x_{i})$}  
%\label{fig:figure3.2}  
%\end{figure} 

%\renewcommand{\thefigure}{3.3}
%\begin{figure} [h]
%\centering  
%\includegraphics[height=7.5cm,width=10cm]{figures/proposition2-1.jpg}  
%\caption{Simplified diagram of $\bar{f}_{x}^{(i)}(y_{i})$ and $\tilde{f}_{\mathbf{x}}^{(i)}(y_{i})$}  
%\label{fig:figure3.3}  
%\end{figure} 

%\qed
%\end{proof}

%To match the condition of the work of Wei et al. \cite{wei:lee:zhang:luo} on obtaining the last iterate convergence result, 
We further derive the following proposition proved in Appendix~\ref{app:e} for the strong convexity of $\tilde{f}(\mathbf{x}^{\prime},\mathbf{y})$ and also the strong concavity of $\tilde{f}(\mathbf{x},\mathbf{y}^{\prime})$.%\footnote{Described by a network with 100 nodes and a strategy set with capacity $k$ = 10 for each $X$ and $Y$ as parameters, our game setting is to select the capacity of~$10$ for~$100$ nodes. It is bound to choose some small $\gamma_{i}$ so the worst~$10$ nodes can be excluded, which is equivalent to taking~$10$ from~$90$ nodes. From the perspective of Absorbing Random Walk, it is to exclude the small influence nodes, which can be said to be reasonable and intuitive and can make up for the shortcomings of our proposition.}

\begin{proposition}[Strong convexity of $\tilde{f}$, fixing $y$] \label{thm:strong_conv}
We have

$$
\tilde{f}(\mathbf{x}, \mathbf{y})-\tilde{f}\left(\mathbf{x}^{*}, \mathbf{y}\right) \leq \nabla_{\mathbf{x}} \tilde{f}(\mathbf{x},\mathbf{y})^{\top}\left(\mathbf{x}-\mathbf{x}^{*}\right)-\gamma\left\|\mathbf{x}-\mathbf{x}^{*}\right\|^{2} / 2
$$
for all~$\mathbf{x}$, where $\gamma_{i}$ satisfies for all~$i$
$$
\tilde{f}(\mathbf{x}, \mathbf{y})-\tilde{f}\left(x_{i}^{*}, \mathbf{x}_{-i}, \mathbf{y}\right) \leq \nabla_{x_{i}} \tilde{f}(\mathbf{x}, \mathbf{y})^{\top}\left(x_{i}-x_{i}^{*}\right)-\gamma_{i}|x_{i}-x_{i}^{*}|^{2} / 2.
$$
\end{proposition}

We further use these propositions to derive the convergence result that copes with our game in the following subsection. First, the average-iterate convergence is deferred completely to Appendix~\ref{app:f} due to its standard technique. Then, we elaborate on the last-iterate convergence.

\subsubsection{Last-Iterate Convergence}
Here, with a further approximation~$\tilde{f}$ of an proper approximation~$\bar{f}$ of $f$ by Proposition~\ref{thm:distance} and \ref{thm:approx}, we can further derive results that are similar to those in \cite{wei:lee:zhang:luo} but specific to the technical details of the competitive opinion optimization game.

We make the assumption that $f$ is $L$-smooth, which is in place with $\left\|F(\mathbf{z})-F\left(\mathbf{z}^{\prime}\right)\right\| \leq L\left\|\mathbf{z}-\mathbf{z}^{\prime}\right\|$ and also $\left\|\mathbf{z}-\mathbf{z}^{\prime}\right\| \leq 1$ for any $\mathbf{z}, \mathbf{z}^{\prime} \in \mathcal{Z}$, and recall \emph{SP-RSI-2} (Definition 1 of \cite{wei:lee:zhang:luo}) here, which is a general condition according to $f(\mathbf{x}, \mathbf{y})$ and $\mathcal{Z}$ to obtain the results of last-iterate convergence.

\begin{definition}[Definition 1 (Generalized Saddle-Point Restricted Secant Inequality (SP-RSI)) of \cite{wei:lee:zhang:luo}] \label{def:sp-rsi}
Condition \emph{SP-RSI-2} is defined as: for any $\mathbf{z} \in \mathcal{Z}$ with $\mathbf{z}^{*}=\Pi_{\mathcal{Z}^{*}}(\mathbf{z})$, where a point $\mathbf{z}=(\mathbf{x}, \mathbf{y}) \in \mathcal{Z}$, and $\mathcal{Z}=\mathcal{X} \times \mathcal{Y}$,
$$
\begin{aligned}
&\text { (SP-RSI-2) }  \quad\qquad F(\mathbf{z})^{\top}\left(\mathbf{z}-\mathbf{z}^{*}\right) \geq C\left\|\mathbf{z}-\mathbf{z}^{*}\right\|^{\beta+2}
\end{aligned}
$$
holds for some parameters $\beta \geq 0$ and $C>0$.
%, and $F(\mathbf{z})=\left(\nabla_{\mathbf{x}} f(\mathbf{x}, \mathbf{y}),-\nabla_{\mathbf{y}} f(\mathbf{x}, \mathbf{y})\right)$.
\end{definition}

\begin{lemma}[Theorem~6 of \cite{wei:lee:zhang:luo}]
\label{lemma:2}
If ${f}$ is strongly convex in $\mathbf{x}$ and strongly concave in $\mathbf{y}$, then SP-RSI-2 holds with $C=\frac{\gamma}{2}$ and $\beta=0$.
\end{lemma}

%By Proposition~\ref{thm:strong_conv} and Lemma~\ref{lemma:2}, our graph can be approximated as being strongly convex in $x$ and strongly concave in $y$ under a certain error, showing that we are also a general case mentioned in Definition~\ref{def:sp-rsi}.

Under the SP-RSI-2 condition with a value of $\beta=0$, we can have a last-iterate convergence guarantee which is analogous to Theorem~8 of \cite{wei:lee:zhang:luo}.
\begin{lemma}[Theorem~8 of \cite{wei:lee:zhang:luo}]
\label{lemma:3}
For any $\eta \leq \frac{1}{8 L}$, if SP-RSI-2 holds with $\beta=0$, then OGDA guarantees linear last-iterate convergence:
$$
\operatorname{dist}\left(\mathbf{z}_{t}, \mathcal{Z}^{*}\right) \leq 96\left(1+C_{5}\right)^{-t}
$$
where $C_{5} \triangleq \frac{15 \min \left\{\eta^{2} C^{2}, 1\right\}}{81(1+\beta) \cdot 2^{\beta}}$.
\end{lemma}

Based on the existing convergence result of Lemma \ref{lemma:3}, we can also get a convergence guarantee for $\alpha_{\tilde{f}}\left(\mathbf{z}_{t}\right)$ (duality gap of $\mathbf{z}_{t}$) when it is a Lipschitz continuous function $f$, which is because 

\begin{eqnarray} \label{eq:duality}
\alpha_{\tilde{f}}\left(\mathbf{z}_{t}\right) 
&\leq \max _{\mathbf{x}^{\prime}, \mathbf{y}^{\prime}}\tilde{f}\left(\mathbf{x}_{t}, \mathbf{y}^{\prime}\right)-\tilde{f}\left(\mathbf{x}^{*}, \mathbf{y}^{\prime}\right)+ \tilde{f}\left(\mathbf{x}^{\prime}, \mathbf{y}^{*}\right)-\tilde{f}\left(\mathbf{x}^{\prime}, \mathbf{y}_{t}\right) \nonumber\\ 
&\leq \mathcal{O}\left(\left\|\mathbf{x}_{t}-\mathbf{x}^{*}\right\|+\left\|\mathbf{y}_{t}-\mathbf{y}^{*}\right\|\right)
=\mathcal{O}\left(\sqrt{\operatorname{dist}\left(\mathbf{z}_{t}, \mathcal{Z}^{*}\right)}\right),
\end{eqnarray} where $\left(\mathbf{x}^{*}, \mathbf{y}^{*}\right)=\Pi_{\mathcal{Z}^{*}}(\mathbf{z}_{t})$.

\begin{theorem} \label{thm:rate}
Algorithm \ref{alg:algo1} with $\eta \leq \frac{1}{8 L}$ where SP-RSI-2 only holds with $\beta=0$ guarantees a linear last-iterate convergence with certain approximation errors:
\begin{eqnarray*}
\alpha_{f}\left(z_{t}\right) &\leq& \mathcal{O}\left(\sqrt{\operatorname{dist}\left(\mathbf{z}_{t}, \mathcal{Z}^{*}\right)}\right) \\&&+\sum_{i}\tilde{f}_{\mathbf{y}}^{(i)}\left((x_{i})_{q}\right)-\big(\tilde{f}_{\mathbf{y}}^{(i)}\left((x_{i})_{p}\right)+\langle\nabla\tilde{f}_{\mathbf{y}}^{(i)}\left((x_{i})_{p}\right), (x_{i})_{q}-(x_{i})_{p}\rangle\big)\\
&&+\sum_{i}\tilde{f}_{\mathbf{x}}^{(i)}((y_{i})_{p})+\langle\nabla\tilde{f}_{\mathbf{x}}^{(i)}\left((y_{i})_{p}\right),(y_{i})_{q}-(y_{i})_{p} \rangle -\tilde{f}_{\mathbf{x}}^{(i)}((y_{i})_{q})+2k. 
\end{eqnarray*}
\end{theorem}

By Theorem~\ref{thm:rate}, the duality gap converges at the rate $\mathcal{O}\left(\sqrt{\operatorname{dist}\left(\mathbf{z}_{t}, \mathcal{Z}^{*}\right)}\right)$ in Lemma~\ref{lemma:3} with some approximation errors characterized by the Bregman divergence in Proposition~\ref{thm:distance} and capacity~$k$ much smaller than node $n$ in Proposition~\ref{thm:approx}.

\section{Discussions and Future Work}
One does not necessarily have to use linear objectives. For example, the objective of sum of the node players' costs is not a linear one.
The size of the selected subset for each player is also a model choice. Currently, both players have the same ``strength" in terms of $\|\mathbf{x}\|_1=\|\mathbf{y}\|_1=k$. It would be interesting to consider if the selected subsets have different sizes.

As future directions, we can generalize competitive opinion optimization to multi-player non-zero-sum games with different (linear) objectives in terms of expressed opinions for different players each optimizing her own objective.
Playing certain no-regret algorithms, the average strategy of each player then might converge to certain more permissive equilibrium (Nash equilibrium, correlated equilibrium, etc.).
It does not really make sense in a zero-sum game to ask about the price of anarchy.
Nevertheless, the price-of-anarchy type of questions becomes interesting and meaningful in a non-zero-sum game setting again.

\bibliographystyle{plain}
\bibliography{competitive}

\begin{thebibliography}{10}

\bibitem{abernethy:bartlett}
J.~Abernethy, P.~L. Bartlett, and E.~Hazan.
\newblock Blackwell approachability and no-regret learning are equivalent.
\newblock In {\em Proc. of~Conference on Learning Theory}, 2011.

\bibitem{ahmadinejad}
A.~Ahmadinejad and H.~Mahini.
\newblock How effectively can we form opinions?
\newblock In {\em Proc. of~International World Wide Web Conference}, 2014.

\bibitem{bharathi}
S.~Bharathi, D.~Kempe, and M.~Salek.
\newblock Competitive influence maximization in social networks.
\newblock In {\em Proc. of 3rd Workshop on Web and Internet Economics}, 2007.

\bibitem{bhawalkar}
K.~Bhawalkar, S.~Gollapudi, and K.~Munagala.
\newblock Coevolutionary opinion formation games.
\newblock In {\em Proc.~45th ACM Symposium on Theory of Computing}, 2013.

\bibitem{bindel}
D.~Bindel, J.~Kleinberg, and S.~Oren.
\newblock How bad is forming your own opinion?
\newblock In {\em Proc. of~52nd Annual IEEE Symposium on Foundations of
  Computer Science}, 2011.

\bibitem{cesa-bianchi}
N.~Cesa-Bianchi and G.~Lugosi.
\newblock Combinatorial bandits.
\newblock In {\em Proc. 22nd Annual Conference on Learning Theory}, 2009.

\bibitem{chen:chen:lu}
P.-A. Chen, Y.-L. Chen, and C.-J. Lu.
\newblock Bounds on the price of anarchy for a more general class of directed
  graphs in opinion formation games.
\newblock {\em Operations Research Letters}, 44, 2016.

\bibitem{chiang:yang}
C.-K. Chiang, T.~Yang, C.-J. Lee, M.~Mahdavi, C.-J. Lu, and S.~Zhu.
\newblock Online optimization with gradual variations.
\newblock In {\em Proc.~Conference on Learning Theory}, 2012.

\bibitem{daskalakis:deckelbaum:kim}
C.~Daskalakis, A.~Deckelbaum, and A.~Kim.
\newblock Near-optimal no-regret algorithms for zero-sum games.
\newblock {\em Games and Economic Behavior}, 92, 2015.

\bibitem{degroot}
M.~DeGroot.
\newblock Reaching a consensus.
\newblock {\em Journal of the American Statistical Association}, 69(345), 1974.

\bibitem{doyle:snell}
P.~Doyle and J.~Snell, editors.
\newblock {\em Random walks and electric networks}.
\newblock Mathematical Association of America, 1984.

\bibitem{freund:schapire_1997}
Y.~Freund and R.~E. Schapire.
\newblock A decision theoretic generalization of on-line learning and an
  application to boosting.
\newblock {\em Journal of Computer and Systems Sciences}, 55(1), 1997.

\bibitem{freund:schapire_1999}
Y.~Freund and R.~E. Schapire.
\newblock Adaptive game playing using multiplicative weights.
\newblock {\em Games and Economic Behavior}, 29, 1999.

\bibitem{friedkin}
N.~E. Friedkin and E.~C. Johnsen.
\newblock Social influence and opinions.
\newblock {\em The Journal of Mathematical Sociology}, 15(3-4), 1990.

\bibitem{gionis}
A.~Gionis, E.~Terzi, and P.~Tsaparas.
\newblock Opinion maximization in social networks.
\newblock In {\em Proc.~13th SIAM International Conference on Data Mining},
  2013.

\bibitem{golowich}
N.~Golowich, S.~Pattathil, and C.~Daskalakis.
\newblock Tight last-iterate convergence rates for no-regret learning in
  multi-player games.
\newblock In {\em Proc. of the 34th Conference on Neural Information Processing
  Systems}, 2020.

\bibitem{goyal}
S.~Goyal, H.~Heidari, and M.~Kearns.
\newblock Competitive contagion in networks.
\newblock {\em Games and Economic Behavior}, 2014.

\bibitem{he:kempe}
X.~He and D.~Kempe.
\newblock Price of anarchy for the n-player competitive cascade game with
  submodular activation functions.
\newblock In {\em Proc. 9th Workshop on Internet and Network Economics}, 2013.

\bibitem{hsieh}
Y.-G. Hsieh, F.~Iutzeler, J.~Malick, and P.~Mertikopoulos.
\newblock On the convergence of single-call stochastic extra-gradient methods.
\newblock In {\em Proc. of the 33rd Conference on Neural Information Processing
  Systems}, 2019.

\bibitem{kalai:vempala}
A.~Kalai and S.~Vempala.
\newblock Efficient algorithms for online decision problems.
\newblock {\em Journal of Computer and System Sciences}, 71, 2005.

\bibitem{koolen}
W.~Koolen, M.~Warmuth, and J.~Kivinen.
\newblock Hedging structured concepts.
\newblock In {\em Proc. 23rd Annual Conference on Learning Theory}, 2010.

\bibitem{koutsoupias:papadimitriou:anarchy}
E.~Koutsoupias and C.~Papadimitriou.
\newblock Worst-case equilibria.
\newblock In {\em Proc. 17th Annual Symposium on Theoretical Aspects of
  Computer Science}, 1999.

\bibitem{lee}
C.~Lee.
\newblock Analysis of perturbation techniques in online learning, 2018.
\newblock Doctoral Dissertation, the Department of Computer Science and
  Engineering, the University of Michigan.

\bibitem{NRTV07}
N.~Nisan, T.~Roughgarden, E.~Tardos, and V.~V. Vazirani, editors.
\newblock {\em Algorithmic Game Theory}.
\newblock Cambridge University Press, 2007.

\bibitem{wei:lee:zhang:luo}
C.-Y. Wei, C.-W. Lee, M.~Zhang, and H.~Luo.
\newblock Linear last-iterate convergence in constrained saddle-point
  optimization.
\newblock In {\em Proc.~9th International Conference on Learning
  Representations}, 2021.

\bibitem{zinkevich}
M.~Zinkevich.
\newblock Online convex programming and generalized infinitesimal gradient
  ascent.
\newblock In {\em Proc.~20th International Conference on Machine Learning},
  2003.

\end{thebibliography}

\appendix
\section{Computing Matrix $Q_{UB}$} \label{app:a}
We restate the computation from Section 3.3 of \cite{gionis}.
The transition matrix $P$ is constructed by normalizing each row vector of the weight matrix $W$.
Given the set of absorbing nodes $B$ and the set of transient nodes $U$,
then $P$ can be partitioned into submatrices $P_{UB}$, $P_{UU}$, identity matrix $I$, and all-zero matrix $\mathbf{0}$,
where $P_{UB}$ is the $|U|\times|B|$ submatrix with the transition probabilities from transient nodes to absorbing nodes
and $P_{UU}$ is the $|U|\times|U|$ submatrix with the transition probabilities between transient nodes.

The probability of transition from $i$ to $j$ in exactly $l$ steps is denoted as the $(i,j)$ entry of the matrix $P_{UU}^l$.
We can construct the $|U|\times|U|$ \emph{fundamental matrix} $F$ of the absorbing random walk where the $(i,j)$ entry is the probability that such random walk starting from $i$ ends up at $j$ without being absorbed.
\[F=\sum_{t=0}^\infty(P_{UU})^l=(I-P_{UU})^{-1}.\]
Finally,  we have that
\[Q_{UB}=FP_{UB},\]
where each entry ${Q_{UB}}_{i,j}$ of such $|U|\times|B|$ matrix is the probability that a random walk starting at transient node $i$ gets absorbed at absorbing node $j$.

\section{Proof of Theorem~\ref{thm:minmax}} \label{app:b}
For every $t$, we have $\mathbf{x}^{(t)} \leftarrow \mathcal{L}_\mathcal{X}(g^{(1)},\ldots,g^{(t-1)})$ and $\mathbf{y}^{(t)} \leftarrow \mathcal{L}_\mathcal{Y}(h^{(1)},\ldots,h^{(t-1)})$ for $g^{(t)}= M \mathbf{y}_t$ and $h^{(t)}= -\mathbf{x}_t M$.
By applying the definition of regret twice, we have
\begin{eqnarray}
\frac{1}{T}\sum_{t=1}^T{\mathbf{x}^{(t)}}^T M \mathbf{y}^{(t)}&=&\min_{\mathbf{x}\in \mathcal{X}}\mathbf{x}^T M(\frac{1}{T}\sum_{t=1}^T \mathbf{y}^{(t)})+\frac{\operatorname{Regret}(\mathcal{L}_\mathcal{X})}{T} \nonumber\leq\max_{\mathbf{y}\in Y}\min_{\mathbf{x}\in X}\mathbf{x}^T M \mathbf{y} +\frac{o(T)}{T},\\
\frac{1}{T}\sum_{t=1}^T{\mathbf{x}^{(t)}}^T M \mathbf{y}^{(t)}&=&\max_{\mathbf{y}\in \mathcal{Y}}(\frac{1}{T}\sum_{t=1}^T \mathbf{x}^{(t)})M \mathbf{y}-\frac{\operatorname{Regret}(\mathcal{L}_\mathcal{Y})}{T} \nonumber\geq\min_{\mathbf{x}\in \mathcal{X}}\max_{\mathbf{y}\in \mathcal{Y}}\mathbf{x}^T M \mathbf{y} -\frac{o(T)}{T}.
\end{eqnarray}
One can obtain $\min_{\mathbf{x}\in \mathcal{X}}\max_{\mathbf{y}\in \mathcal{Y}}\mathbf{x}^T M \mathbf{y}\leq\max_{\mathbf{y}\in Y}\min_{\mathbf{x}\in \mathcal{X}}\mathbf{x}^T M \mathbf{y}+\frac{o(T)}{T}$ by combining the inequalities above and setting $T\rightarrow\infty$,
and $\min_{\mathbf{x}\in \mathcal{X}}\max_{\mathbf{y}\in \mathcal{Y}}\mathbf{x}^T M \mathbf{y}\geq\max_{\mathbf{y}\in \mathcal{Y}}\min_{\mathbf{x}\in \mathcal{X}}\mathbf{x}^T M \mathbf{y}$ by weak duality.

\section{Estimating the Adversary's Strategy} \label{app:g}

We now show that the adversarial player's strategy $\mathbf{y}^{\prime}$ satisfies with high probability
\begin{eqnarray*}
f(\mathbf{E}_{\mathbf{x}^{(t)}\sim X^{(t)}}[\mathbf{x}^{(t)}],\mathbf{y}^{(t)})\geq\arg\max_{\mathbf{y}\in \mathcal{Y}}f(\mathbf{E}_{\mathbf{x}^{(t)}\sim X^{(t)}}[\mathbf{x}^{(t)}],\mathbf{y})-\epsilon,
\end{eqnarray*}
where $\epsilon>0$ is an error from estimation,
can be efficiently computed (and thereby finding the loss function $f^{(t)}$) by \emph{estimating the probability that the min player selects each node~$i$}.

%\begin{proof}[Proposition~\ref{pro:adversary}]
Recall that the cost value can be computed as
\begin{eqnarray*}
f(\mathbf{E}_{\mathbf{x}^{(t)}\sim X^{(t)}}[\mathbf{x}^{(t)}],\mathbf{y})&=&C(Q_{UB}({\mathbf{E}_{\mathbf{x}^{(t)}\sim X^{(t)}}[\mathbf{s}^{(t)}}']+\mathbf{y}))=\ell^\mathbf{T}(\mathbf{E}_{\mathbf{x}^{(t)}\sim X^{(t)}}[{\mathbf{s}^{(t)}}']+\mathbf{y}),
\end{eqnarray*}
where $\mathbf{E}_{\mathbf{x}^{(t)}\sim X^{(t)}}[{\mathbf{s}^{(t)}}']=\mathbf{s}+\mathbf{E}_{\mathbf{x}^{(t)}\sim X^{(t)}}[\mathbf{x}^{(t)}]$ is the expected modified internal opinion by the min player.
The randomized pure strategy produced by the follow-the-perturbed-leader algorithm at that time step provides a randomized way to modify entries of the vector $\mathbf{s}$ to the value $-1$.

Let $p^{(t)}_i$ denote the probability that it chooses to modify node $i$ at time step $t$.
Note that $\mathbf{E}_{\mathbf{x}^{(t)}\sim \mathbf{x}^{(t)}}[\mathbf{x}^{(t)}]=(p^{(t)}_i\cdot(-s_i-1))_i$.
For each $t$, we can draw $r$ samples, each of which is a $k$-subset of nodes, by using randomness of $R_t$ (Step~4 in Algorithm~\ref{alg:1}). % from distribution $\mathbf{x}^{(t)}$.
We let $\hat{p}^{(t)}_i$ denote the ratio defined as the number of samples in which it chooses to modify node $i$ to the total number of samples $r$.
By applying the Hoeffding's inequality, we have
\[\mathbf{Pr}[|\hat{p}^{(t)}_i - p^{(t)}_i|>\epsilon]\leq 2\exp(-2\epsilon^2 r).\]
That is, by choosing $r=T_t$ where $T_t$ is the sample size for times step~$t$, we can use the estimated probability $\hat{p}^{(t)}_i$ that is within an estimation error $\epsilon=\sqrt{\frac{\ln T_t}{T_t}}$ from the actual one with at least probability of $1-\frac{2}{T_t^2}$.

Then the expected cost of the min player (before the adversarial player's intervention) is
    \[\sum_i \ell_i (\hat{p}^{(t)}_i(-1) + (1-\hat{p}^{(t)}_i)s_i).\]
The adversarial player would like to increase the min player's cost as much as possible.
For the max player, by compromising node $i$, the expected cost can be increased by
    \[\Delta_i=\ell_i \cdot 1 - \ell_i (\hat{p}^{(t)}_i(-1) + (1-\hat{p}^{(t)}_i)s_i).\]
Thus, the adversarial (max) player simply chooses the $k$ nodes with the $k$ largest $\Delta_i$'s.
Note that using $\hat{p}^{(t)}_i$ for each node $i$ incurs an estimation error that jointly guarantees computing the adversary's $\mathbf{y}^{(t)}$ efficiently such that with at least probability of $1-\frac{2}{T_t^2}$
\begin{eqnarray} %\label{sec:estimation}
f(\mathbf{E}_{\mathbf{x}^{(t)}\sim \mathbf{x}^{(t)}}[\mathbf{x}^{(t)}],\mathbf{y}^{(t)})\nonumber\geq\arg\max_{\mathbf{y}\in \mathcal{Y}}f(\mathbf{E}_{\mathbf{x}^{(t)}\sim \mathbf{x}^{(t)}}[\mathbf{x}^{(t)}],\mathbf{y})-\sqrt{\frac{\ln T_t}{T_t}}.
\end{eqnarray}

\section{Proof of Lemma~\ref{eq:no-regret}} \label{app:h}
We apply Theorem 1.1(a) of \cite{kalai:vempala}, which we restate in the following, %whose proof is restated in Appendix~\ref{app:c})} 
in our context with random vector $R_t\in[0,\sqrt{T}]^n$ chosen uniformly at random in each dimension.
\begin{theorem}[Theorem 1.1(a) of \cite{kalai:vempala}] \label{thm:fpl}
Let $s_1,\ldots,s_T\in S$ be a state sequence (i.e., loss sequence in our terminology). Running the additive version of the follow-the-perturbed-leader algorithm (FPL) with parameter $\varepsilon\leq 1$ (i.e., learning rate) gives,
$$\mathbf{E}[\mbox{cost of FPL(}\varepsilon)]\leq\mbox{min-cost}_T+\varepsilon RAT+\frac{D}{\varepsilon},$$
where $D$ is the diameter of the decision space, $R$ is an upper bound on a loss value, and $A$ is an upper bound on the $L_1$-norm of a loss vector. It makes sense to state the bound in terms of the minimizing value of $\varepsilon$, giving
$$\mathbf{E}[\mbox{cost of FPL(}\frac{D}{\sqrt{RAT}})]\leq\mbox{min-cost}_T+2\sqrt{DRAT}.$$
\end{theorem}
In our context, we can choose $D=n, R=n, A=n$ so $\varepsilon=\frac{1}{\sqrt{T}}$, and the decision space is $\mathcal{X}$. By Theorem~\ref{thm:fpl}, the first inequality holds for some constant $0<\delta\leq 1$
\begin{eqnarray*}
&&\frac{1}{T}\sum_{t=1}^T \mathbf{E}_{\mathbf{x}^{(t)}\sim X^{(t)}}[f^{(t)}(\mathbf{x}^{(t)})]=\frac{1}{T}\sum_{t=1}^T E_{\mathbf{x}^{(t)}\sim X^{(t)}}[f(\mathbf{x}^{(t)},\mathbf{y}^{(t)})]\\
&\leq&(1+\delta)(\frac{1}{T}\min_{\mathbf{x}\in \mathcal{X}}\sum_{t=1}^T f^{(t)}(\mathbf{x})+\frac{2n^{3/2}}{\sqrt{T}})\\
&=&(1+\delta)\frac{1}{T}\min_{\mathbf{x}\in \mathcal{X}}\sum_{t=1}^T f(\mathbf{x},\mathbf{y}^{(t)})+\frac{2(1+\delta)n^{3/2}}{\sqrt{T}}.
\end{eqnarray*}

\section{Proof of Theorem~\ref{thm:strategy}} \label{app:c}
For the min player, we have that
\begin{eqnarray*}
&&\max_{\mathbf{y}\in \mathcal{Y}}f(\mathbf{E}_{T_{\min}\in\{1,\ldots,T\},\mathbf{x}^{(T_{\min})}\sim X^{(T_{\min})}}[\mathbf{x}^{(T_{\min})}],\mathbf{y})\\
&=&\max_{\mathbf{y}\in \mathcal{Y}}\frac{1}{T}\sum_{t=1}^T f(\mathbf{E}_{\mathbf{x}^{(t)}\sim \mathbf{x}^{(t)}}[\mathbf{x}^{(t)}],\mathbf{y})\leq\frac{1}{T}\sum_{t=1}^T\max_{\mathbf{y}\in \mathcal{Y}}f(\mathbf{E}_{\mathbf{x}^{(t)}\sim \mathbf{x}^{(t)}}[\mathbf{x}^{(t)}],\mathbf{y})
\end{eqnarray*}
by the linearity of expectation and for any $\mathbf{y}'$
\[\max_{\mathbf{y}\in \mathcal{Y}}f(\mathbf{E}_{\mathbf{x}^{(t)}\sim X^{t}}[\mathbf{x}^{(t)}],\mathbf{y})\geq f(\mathbf{E}_{\mathbf{x}^{(t)}\sim X^{t}}[\mathbf{x}^{(t)}],\mathbf{y}').\]

For each $t$, applying Proposition~\ref{pro:adversary} that accounts for estimation of the adversary's strategy with an estimation error $\epsilon=\sqrt{\frac{\ln T}{T}}$, we obtain with probability of at least $1-2/T$ (by a union bound)
\begin{eqnarray*}
\frac{1}{T}\sum_{t=1}^T\max_{\mathbf{y}\in \mathcal{Y}}f(\mathbf{E}_{\mathbf{x}^{(t)}\sim \mathbf{x}^{(t)}}[\mathbf{x}^{(t)}],\mathbf{y})\leq\frac{1}{T}\sum_{t=1}^T f(\mathbf{E}_{\mathbf{x}^{(t)}\sim \mathbf{x}^{(t)}}[\mathbf{x}^{(t)}],\mathbf{y}^{(t)})+\sqrt{\frac{\ln T}{T}}.
\end{eqnarray*}
Due to the fact that $f$ is affine in $\mathbf{x}^{(t)}$, the right-hand side of the inequality is equivalent to
\begin{eqnarray*}
\frac{1}{T}\sum_{t=1}^T\mathbf{E}_{\mathbf{x}^{(t)}\sim \mathbf{x}^{(t)}}[f(\mathbf{x}^{(t)},\mathbf{y}^{(t)})]+\sqrt{\frac{\ln T}{T}}.
\end{eqnarray*}

By the $O(\frac{1}{\sqrt{T}})$-average regret property from Lemma \ref{eq:no-regret},% (so the first inequality holds with probability of at least $1-\frac{c''}{n}$), 
we finally have with probability of at least $1-\frac{2}{T}$ for some constant $0<\delta\leq 1$
\begin{eqnarray*}
&&\frac{1}{T}\sum_{t=1}^T\mathbf{E}_{\mathbf{x}^{(t)}\sim \mathbf{x}^{(t)}}[f(\mathbf{x}^{(t)},\mathbf{y}^{(t)})]+\sqrt{\frac{\ln T}{T}}\\
&\leq&(1+\delta)(\frac{1}{T}\min_{\mathbf{x}\in \mathcal{X}}\sum_{t=1}^T g(\mathbf{x},\mathbf{y}^{(t)})+\frac{2n^{3/2}}{\sqrt{T}})+\sqrt{\frac{\ln T}{T}}
%&=&\frac{1}{T}\sum_{t=1}^T g(\mathbf{x}^*,\mathbf{y}^{(t)})+O(\frac{1}{\sqrt{T}})\\
\leq(1+\delta)\min_{\mathbf{x}\in \mathcal{X}}\max_{\mathbf{y}\in \mathcal{Y}}f(\mathbf{x},\mathbf{y})+\frac{\sqrt{\ln T}+2(1+\delta)n^{3/2}}{\sqrt{T}}.
\end{eqnarray*}
In summary, for the min player, with probability of at least $1-\frac{c''}{n}-\frac{2}{T}$
\[\max_{\mathbf{y}\in \mathcal{Y}}f(\mathbf{E}_{T_{\min}\in\{1,\ldots,T\},\mathbf{x}^{(T_{\min})}\sim X^{(T_{\min})}}[\mathbf{x}^{(T_{\min})}],\mathbf{y})\leq(1+\delta)\min_{\mathbf{x}\in \mathcal{X}}\max_{\mathbf{y}\in \mathcal{Y}}f(\mathbf{x},\mathbf{y})+\frac{\sqrt{\ln T}+2(1+\delta)n^{3/2}}{\sqrt{T}}.\]

\section{Proof of Proposition~\ref{thm:approx}} \label{app:d}
The difference between $f_{\mathbf{y}}$ and $\bar{f}_{\mathbf{y}}$ is maximized at $0$ so
$\sum_{i}(\ell_{i} \cdot\left(s_{i}+0+y_{i}\right)-\ell_{i}) \leq\sum_{i}\left(\ell_{i} \cdot\left(s_{i}+y_{i}\right)-\ell_{i}\right) \leq\sum_{i}\left(\ell_{i} \cdot\left(s_{i}+y_{i}-1\right)\right)\leq\sum_{i}\left(s_{i}+y_{i}-1\right) \leq \sum_{i} s_{i}+\sum_{i} y_{i}-n\leq$ $n+k-n=k<<n$, where $0 \leq \ell_{i} \leq 1$ by the property that $Q_{U B}{ }_{i, j}$ is the probability from transient $i$ to absorbing $j$. Similarly, under the condition that $x$ is fixed, the corresponding conclusion can also be obtained $ f_{\mathbf{x}}(y)-\bar{f}_{x}(y) \leq k$ for all $y$. 

\section{Proof of Proposition~\ref{thm:strong_conv}} \label{app:e}
By the strong convexity for each $x_{i}$,
$$
\sum_{i}(\tilde{f}(\mathbf{x}, \mathbf{y})-\tilde{f}(x_{i}^{*}, x_{-i}, \mathbf{y})) \leq \sum_{i}(\nabla_{x_{i}} \tilde{f}(\mathbf{x},\mathbf{y})^{\top}(x_{i}-x_{i}^{*})-\gamma_{i}\|x_{i}-x_{i}^{*}\|^{2} / 2).
$$
We have
\begin{eqnarray*}
&&\sum_{i}(\tilde{f}(\mathbf{x}, \mathbf{y})-\tilde{f}(x_{i}^{*}, x_{-i}, \mathbf{y}))\\ 
&=&\sum_{i}(\sum_{j \neq-i} \ell_{j}(s_{j}+x_{j}+y_{j})+\ell_{i}(s_{i}+x_{i}+y_{i})-\sum_{j \neq-i} \ell_{j}(s_{j}+x_{j}+y_{j})-\ell_{i}(s_{i}+x_{i}^{*}+y_{i})) \\
&=&\sum_{i}(\ell_{i}(s_{i}+x_{i}+y_{i})-\ell_{i}(s_{i}+x_{i}^{*}+y_{i}))\\
&=&\tilde{f}(\mathbf{x}, \mathbf{y})-\tilde{f}(\mathbf{x}^{*}, \mathbf{y})
\end{eqnarray*}
which is the LHS of the inequality. For the RHS,
$\sum_{i}(\nabla_{x_{i}} \tilde{f}(x,y)^{\top}(x_{i}-x_{i}^{*})=\nabla_x\tilde{f}(\mathbf{x}, \mathbf{y})^{\top}(\mathbf{x}-\mathbf{x}^{*})$ and
$$
\sum_{i} \gamma_{i}\left\|x_{i}-x_{i}^{*}\right\|^{2} \geq \min _{i} \gamma_{i} \sum_{i}\left\|x_{i}-x_{i}^{*}\right\|^{2}\eqno{(5)}
$$
Thus, $\gamma$ can be set to $\min _{i} \gamma_{i}$, and the claimed inequality holds. 

%\noindent Under the condition that each $\gamma_{i}$ will satisfy its own inequality, because it will be limited by the least curved slope, the final overall $\gamma$ we do not underestimate it,  it can only be set in this way (as shown in inequality (5)).

\section{Average-Iterate Convergence} \label{app:f}

%\noindent \textbf{Definition 1.} Given an online convex optimization (OCO) algorithm $\mathcal{L}$ and a sequence of loss vectors $f^{(1)}, f^{(2)}, \ldots \in \mathbb{R}^{n}$, the player suffers loss $f^{(t)}(\mathbf{x}^{(t)})$, let regret after $T$, Regret $\left(\mathcal{L} ; f_{1: T}\right)$ be defined as $\sum_{t=1}^{T} f^{(t)}(\mathbf{x}^{(t)})-\min _{x \in \mathcal{K}} \sum_{t=1}^{T} f^{(t)}(x).$\footnote{For a player maximizing her total reward given a sequence of reward vectors, the regret can also be defined as $\max_{y \in \mathcal{K}}\sum_{t=1}^{T}  f^{(t)}(y)-\sum_{t=1}^{T}f^{(t)}(\mathbf{y}^{\prime}).$}

%\begin{lemma} 
%For any feasible decision convex set $\mathcal{K} \subseteq R^{d}$ there exists an algorithm $\mathcal{L}_{\mathcal{K}}$ such that Regret $\left(\mathcal{L}_{\mathcal{K}}\right)=o(T)$ for any sequence of loss vectors $\left\{f^{(t)}\right\}$ with bounded norm.
%\end{lemma}

A desirable property of online convex learning/optimization algorithms is to have a sub-linear scaling regret rate in~$T$. %We often use the term no-regret algorithm when it has the property mentioned above. 
This property implies that the regret per round goes to zero as $T$ goes to infinity. For example, ${\operatorname{Regret}\left(\mathcal{L}\right)}/ T=\frac{\sqrt{T}}{T} \rightarrow 0$ when $T \rightarrow \infty$. Using the property of no-regret algorithms and the definition of regret, we would like to prove the average duality gap convergence for our game setting.
%\begin{theorem} For compact convex sets $\mathcal{X} \subset \mathbb{R}^{n}$ and $\mathcal{Y} \subset$ $\mathbb{R}^{m}$ and any biaffine function: $ \mathcal{X} \times \mathcal{Y} \rightarrow \mathbb{R}$, we have
%$$
%\min _{\mathbf{x} \in \mathcal{X}} \max _{\mathbf{y} \in \mathcal{Y}} f(\mathbf{x}, %\mathbf{y})=\max _{\mathbf{y} \in \mathcal{Y}} \min _{\mathbf{x} \in \mathcal{X}} %f(\mathbf{x}, \mathbf{y})
%$$

In our case of two-player zero-sum games, the regret of the min player ${\operatorname{Regret}\left(\mathcal{L}_{\mathcal{X}}\right)}$ is defined according Definition 1 as: 
$$
{\operatorname{Regret}\left(\mathcal{L}_{\mathcal{X}}\right)} = \sum_{t=1}^{T} \tilde{f}\left(\mathbf{x}^{(t)}, \mathbf{y}^{(t)}\right)-\min _{\mathbf{x}^{\prime} \in X} \sum_{t=1}^{T} \tilde{f}\left(\mathbf{x}^{\prime}, \mathbf{y}^{(t)}\right),
$$
and the regret ${\operatorname{Regret}\left(\mathcal{L}_{\mathcal{Y}}\right)}$ of max player can also be defined as 
$$
{\operatorname{Regret}\left(\mathcal{L}_{\mathcal{Y}}\right)} = \max_{\mathbf{y}^{\prime} \in Y} \sum_{t=1}^{T} \tilde{f}\left(\mathbf{x}^{(t)}, \mathbf{y}^{\prime}\right)-\sum_{t=1}^{T} \tilde{f}\left(\mathbf{x}^{(t)}, \mathbf{y}^{(t)}\right).
$$
Our game is played repeatedly, for every $t$, we have $\mathbf{x}^{(t)} \leftarrow \mathcal{L}_{\mathcal{X}}\left(g^{(1)}, \ldots, g^{(t-1)}\right)$ and $\mathbf{y}^{(t)} \leftarrow \mathcal{L}_{\mathcal{Y}}\left(h^{(1)}, \ldots, h^{(t-1)}\right)$, for $g^{(t)}=\nabla_{\mathbf{x}} f(\mathbf{\mathbf{x}^{(t)}} ,\mathbf{y^{(t)}})$ and $h^{(t)} =-\nabla_{\mathbf{y}} f(\mathbf{\mathbf{x}^{(t)}},\mathbf{y^{(t)}})$, with the $o(T)$-regret algorithms $\mathcal{L}_{\mathcal{X}}$ and $\mathcal{L}_{\mathcal{Y}}$ that two players play, respectively, in a zero-sum game.

Applying the definition of regret for each $\mathcal{L}_{\mathcal{X}}$ and $\mathcal{L}_{\mathcal{Y}}$, and take the average of regret by $T$ . We obtain
$$
\begin{aligned}
&\frac{1}{T} \sum_{t=1}^{T} \tilde{f}(\mathbf{\mathbf{x}^{(t)}} ,\mathbf{y^{(t)}})=\max _{\mathbf{y}^{\prime} \in \mathcal{Y}} \frac{1}{T} \sum_{t=1}^{T}\tilde{f}(\mathbf{\mathbf{x}^{(t)}} ,\mathbf{y}^{\prime})-\frac{\operatorname{Regret}\left(\mathcal{L}_{\mathcal{Y}}\right)}{T}=\max _{\mathbf{y}^{\prime} \in \mathcal{Y}} \frac{1}{T} \sum_{t=1}^{T}\tilde{f}(\mathbf{\mathbf{x}^{(t)}} ,\mathbf{y}^{\prime})-\frac{o(T)}{T}  , \\
&\frac{1}{T} \sum_{t=1}^{T} \tilde{f}(\mathbf{x}^{(t)} ,\mathbf{y^{(t)}})=\min _{\mathbf{x}^{\prime} \in \mathcal{X}} \frac{1}{T} \sum_{t=1}^{T}\tilde{f}(\mathbf{x}^{\prime} ,\mathbf{y^{(t)}})+\frac{\operatorname{Regret}\left(\mathcal{L}_{\mathcal{X}}\right)}{T}=\min _{\mathbf{x}^{\prime} \in \mathcal{X}} \frac{1}{T} \sum_{t=1}^{T}\tilde{f}(\mathbf{x}^{\prime} ,\mathbf{y^{(t)}})+\frac{o(T)}{T}.
\end{aligned}
$$

Combining the two above inequalities and adding up the approximation errors in Propositions 1 and 2, we can guarantee an ``average duality gap convergence”:

\begin{eqnarray*}
&&\frac{1}{T} \sum_{t=1}^{T} \max_{\mathbf{x}^{\prime}, \mathbf{y}^{\prime}}\tilde{f}\left(\mathbf{x}^{(t)}, \mathbf{y}^{\prime}\right)-\tilde{f}\left(\mathbf{x}^{\prime}, \mathbf{y}^{(t)}\right)+2k\\
&&+\sum_{i}\big((\tilde{f}_{\mathbf{x}}^{(i)}((y_{i})_{p})+\langle\nabla\tilde{f}_{\mathbf{x}}^{(i)}\left((y_{i})_{p}\right),(y_{i})_{q}-(y_{i})_{p} \rangle) -\tilde{f}_{\mathbf{x}}^{(i)}((y_{i})_{q})\big)\\
&&+\sum_{i}\big(\tilde{f}_{\mathbf{y}}^{(i)}\left((x_{i})_{q}\right)-\big(\tilde{f}_{\mathbf{y}}^{(i)}\left((x_{i})_{p}\right)+\langle\nabla\tilde{f}_{\mathbf{y}}^{(i)}\left((x_{i})_{p}\right), (x_{i})_{q}-(x_{i})_{p}\rangle\big)\big) \\
%&\qquad\qquad=\frac{o(T)+\sum_{t=1}^{T}\left(\left(D_{\tilde{f}}\left(y_{p}^{\prime}, y_{q}^{\prime}\right)+D_{\tilde{f}}\left(x_{p}^{\prime}, x_{q}^{\prime}\right)\right)+2k\right)}{T}\\
%&\qquad\qquad=\mathcal{O}\left(\frac{\sqrt{T}}{T}\right)+\frac{\sum_{t=1}^{T}\left(\left(D_{\tilde{f}}\left(y_{p}^{\prime}, y_{q}^{\prime}\right)+D_{\tilde{f}}\left(x_{p}^{\prime}, x_{q}^{\prime}\right)\right)+2k\right)}{T} \\
&=&\mathcal{O}\left(\frac{1}{\sqrt{T}}\right)+\frac{\sum_{t=1}^{T}\left(\left(D_{\tilde{f}}\left(y_{p}^{\prime}, y_{q}^{\prime}\right)+D_{\tilde{f}}\left(x_{p}^{\prime}, x_{q}^{\prime}\right)\right)\right)}{T}.
\end{eqnarray*}

We use notation $D_{\tilde{f}}\left(x_{p}^{\prime}, x_{q}^{\prime}\right)$ represents the maximum overall divergence of each dimension (  $\sum_{i}\big(\tilde{f}_{\mathbf{y}}^{(i)}\left((x_{i})_{q}\right)-\big(\tilde{f}_{\mathbf{y}}^{(i)}\left((x_{i})_{p}\right)+\langle\nabla\tilde{f}_{\mathbf{y}}^{(i)}\left((x_{i})_{p}\right), (x_{i})_{q}-(x_{i})_{p}\rangle\big)\big)$) when selected $\mathbf{x}^{\prime}$, and the notation $D_{\tilde{f}}\left(y_{p}^{\prime}, y_{q}^{\prime}\right)$ similarly represents the maximum overall divergence of each dimension ($\sum_{i}\big((\tilde{f}_{\mathbf{x}}^{(i)}((y_{i})_{p})+\langle\nabla\tilde{f}_{\mathbf{x}}^{(i)}\left((y_{i})_{p}\right),(y_{i})_{q}-(y_{i})_{p} \rangle) -\tilde{f}_{\mathbf{x}}^{(i)}((y_{i})_{q})\big)$) when selected $\mathbf{y}^{\prime}$, and $k$ is the limit of capacity for our game.
Here, we can conclude an average duality convergence gap of OGDA at a rate of $\mathcal{O}\left(\frac{1}{\sqrt{T}}\right)$ with some approximation errors after $T$ iterations.
%\end{theorem}

\end{document}